\title{Approximating APS Under Submodular and XOS Valuations\\ with Binary Marginals}
\author{
Pooja Kulkarni\thanks{University of Illinois at Urbana-Champaign.}\\ \texttt{poojark2@illinois.edu} \and
Rucha Kulkarni\thanks{University of Illinois at Urbana-Champaign.}\\\texttt{ruchark2@illinois.edu} \and
Ruta Mehta\thanks{University of Illinois at Urbana-Champaign}\\ \texttt{rutameht@illinois.edu}
}
\let\origappendix\appendix 
\renewcommand\appendix{\clearpage\pagenumbering{arabic}\origappendix}
\theoremstyle{plain}
\newtheorem{remark}{Remark}[section]
\newtheorem{lemma}{Lemma}[section]
\newtheorem*{lemma*}{Lemma}
\newtheorem{theorem}{Theorem}[section]
\newtheorem{claim}{Claim}[section]
\newtheorem*{claim*}{Claim}
\newtheorem{definition}{Definition}[section]
\newcommand{\A}{\mathcal{A}}
\newcommand{\T}{\mathcal{T}}
\newcommand{\Set}{\mathcal{S}}
\newcommand{\M}{{\mathcal M}}
\newcommand{\Vals}{(v_i)_{i\in [n]}}
\newcommand{\I}{{\mathcal I}}
\newcommand{\F}{\mathcal{F}}
\newcommand{\Ins}{([n],[m],\Vals)}
\newcommand{\vecp}{\mathbf{p}}
\newcommand{\classNP}{{\sf NP}}
\newcommand{\threematch}{{{\sf 3\text{-}D\text{-}MATCHING}}\xspace}
\newcommand{\EF}{{\sf{EF1}}\xspace}
\newcommand{\Prop}{{\sf{Prop1}}\xspace}
\newcommand{\Propx}{{\sf{Propx}}\xspace}
\newcommand{\MMS}{{\sf{MMS}}\xspace}
\newcommand{\APS}{{\sf{APS}}\xspace}
\newcommand{\MSW}{{\sf{MSW}}\xspace}
\newcommand{\PA}{{{A^\pi}}}
\newcommand{\R}{\mathcal R}
\newcommand{\XOS}{{\sf{XOS}}\xspace}
\newcommand{\MRF}{{\sf{MRF}}\xspace}
\newcommand{\MRFs}{{\sf{MRFs}}\xspace}
\newif\ifsoda
\let\oldnl\nl
\newcommand{\nonl}{\renewcommand{\nl}{\let\nl\oldnl}}
\DeclareMathOperator*{\argmax}{\arg\!\max}
\renewcommand{\nonl}{\renewcommand{\nl}{\let\nl\oldnl}}
\long\def\symbolfootnote[#1]#2{\begingroup%
\def\thefootnote{\fnsymbol{footnote}}\footnote[#1]{#2}\endgroup}
\newcommand{\Price}{\mathcal{P}}
\begin{document}
\maketitle
\begin{abstract}

We study the problem of fairly dividing indivisible goods among a set of agents under the fairness notion of Any Price Share (\APS). \APS is known to dominate the widely studied Maximin share (\MMS). Since an exact \APS allocation may not exist, the focus has traditionally been on the computation of approximate \APS allocations. \cite{BabaioffEF21} studied the problem under additive valuations, and asked $(i)$ how large can the $\APS$ value be compared to the $\MMS$ value? and $(ii)$ what guarantees can one achieve beyond additive functions. We partly answer these questions by considering valuations beyond additive, namely submodular and \XOS functions, with binary marginals. 

For the submodular functions with binary marginals, also known as matroid rank functions (\MRFs), we show that \APS is exactly equal to \MMS. Consequently, following \cite{BarmanBKS20} we show that an exact $\APS$ allocation exists and can be computed efficiently while maximizing the social welfare. Complementing this result, we show that it is $\classNP$-hard to compute the \APS value within a factor of $5/6$ for submodular valuations with three distinct marginals of $\{0, \frac{1}{2}, 1\}$.

We then consider binary \XOS functions, which are immediate generalizations of binary submodular functions in the complement free hierarchy. In contrast to the \MRFs setting, \MMS and \APS values are not equal under this case. Nevertheless, we can show that they are only a constant factor apart. In particular, we show that under binary \XOS valuations, $\MMS \leq \APS \leq 2 \cdot \MMS + 1$. Further, we show that this is almost the tightest bound we can get using \MMS, by giving an instance where $\APS \geq 2 \cdot \MMS$. The upper bound on \APS, combined with \cite{li2021fair}, implies a $~0.1222$-approximation for $\APS$ under binary $\XOS$ valuations. And the lower bound implies the non-existence of better than $0.5$-\APS even when agents have identical valuations, which is in sharp contrast to the guaranteed existence of exact \MMS allocation when agent valuations are identical. 
\end{abstract}
\paragraph{Acknowledgements.} Pooja Kulkarni, Rucha Kulkarni and Ruta Mehta are supported by NSF CAREER Award CCF 1750436.
\section{Introduction}

Finding fair allocations of indivisible resources is a central problem within economics, game theory, social choice theory, and computer science. Given a set $[n]$ of agents and a set $[m]$ of indivisible items, the problem asks to partition the items among the agents in a \emph{fair} manner. Preferences of each agent $i\in[n]$ for bundles of goods are represented by monotone valuation functions $v_i:2^{[m]} \rightarrow \mathbb R_+$. A formal study of fair division began with the work of \cite{steinhaus1948problem}. Since then, several notions of fairness like \MMS, \EF and \Prop have been introduced and are well-studied (See ~\cite{AmanatidisBFV22survey, AzizLMW22survey} for surveys on these). 

Any Price Share (\APS) is one of the more recently introduced \cite{BabaioffEF21} notions, and has already garnered significant interest (See Section \ref{sec:rel-work} for a brief review). The \APS value of an agent is defined as the maximum value she can obtain with a budget of $1/n,$ given any vector of prices of goods that sums to $1.$ An allocation where every agent gets at least her \APS value is called an {\em \APS allocation}. An attractive feature of \APS is that it is independent of the valuations of the other agents. This is the same with the well-studied notion of maximin share (\MMS)~\cite{ghodsi2018fair, garg2018approximating, BarmanV21, li2021fair}. The \APS value of any agent is known to dominate their \MMS value~\cite{BabaioffEF21}. As \MMS allocations are known to not exist~\cite{ProcacciaW14} even in the additive valuations case\footnote{the value of a set equals the sum of values of goods in the set.}, the same holds for $\APS$ allocations. We, therefore, focus on the problem of finding approximate \APS, or $\alpha$-\APS allocations, which give every agent a bundle of value at least $\alpha$ times their \APS, for some $\alpha>0.$

The problem of finding approximate \APS allocations has been studied for the case of additive valuations~\cite{BabaioffEF21} and more recently for submodular valuations \cite{ben2023fair}. Numerous real world applications of fair division, like public housing to ethnic minorities, assigning kindergarten slots, course seat assignments for classes with capacity constraints and where students can specify preferences for a fixed maximum number of classes, require the valuation functions to be beyond additive and capture a \textit{diminishing marginal returns} property: this essentially means that the marginal value of a good over a set of goods diminishes over supersets of the set. Submodular functions capture this very natural property and are therefore considered a fundamental class of valuations. Fractionally subadditive (\XOS) functions are immediate generalizations of submodular functions in the complement-free hierarchy. We study $\APS$ under both these function classes, with the constraint that the marginal values are \emph{binary}, meaning the marginal value of any good over any subset of goods is either zero or one. These classes have a rich structure and have been well-studied for other fairness notions and also in optimization theory (See Section \ref{sec:rel-work} for a brief review). We partly resolve the following questions posed in \cite{BabaioffEF21} for submodular and \XOS functions with binary marginal values. 


\textit{Question 1: How far apart can the \MMS and \APS values of an agent be in any instance?}

\textit{Question 2: What guarantees for \APS can we ensure beyond additive valuations?}

Submodular functions with binary marginal values are equivalently known as Matroid rank functions (\MRFs), and are widely studied, for instance~\cite{schrijver2003combinatorial,Shioura12, benabbou2021finding, BarmanV21}. We show the following surprising result. Although the \APS value is known to be strictly higher than the \MMS value even for general additive functions \cite{BabaioffEF21}, we show that for any \MRF, these two notions are equivalent. This immediately leads to polynomial time algorithms to obtain exact \APS allocations using the algorithms that obtain exact \MMS allocations \cite{BarmanV21}. In fact, the known algorithms also ensure economic efficiency, by giving \APS allocations that simultaneously maximize social welfare (the total sum of values received by agents).

We then analyze the classic generalization of binary submodular functions in the complement free hierarchy, namely fractionally subadditive functions (\XOS) with binary marginals. For this, we show that the \APS value of an agent with such a valuation function is at most $2\mu+1,$ where $\mu$ is her \MMS value. Using this fact together with $0.3666$-\MMS allocation computing algorithm by \cite{li2018fair} we obtain an efficient algorithm for computing $0.1222$-\APS allocation. In contrast to the relation between \MMS and \APS for \MRFs, we show that there exist instances with only two agents and identical valuations in this setting such that $\APS \ge 2 \MMS$. As a consequence we get that even under identical valuations with two agents, better than $0.5$-\APS allocation may not exist. This is in sharp contrast to $\MMS$ where by definition an exact $\MMS$ allocation exists when agents have identical valuations. 

Finally, we show that if binary submodular functions are generalized to allow three distinct marginal values, in $\{0,1/2,1\},$ instead of the two values $\{0,1\}$ in \MRFs, then the problem of computing $\alpha$-\APS allocations, even among agents with identical valuation functions, for any factor $\alpha$ better than $5/6,$ is \classNP-hard. Equivalently, this means that the problem of computing the \APS values approximately up to a factor better than $5/6$ for such settings, is \classNP-hard.

Our results can be summarized as follows.
\begin{itemize}
    \item \APS = \MMS for submodular functions with binary marginals. Exact \APS values and allocations that give \APS along with maximum social welfare can thus be efficiently computed.
    \item \APS$\leq2\cdot$\MMS$+1,$ for $\XOS$ functions with binary marginals. A $0.1222$-approximate \APS allocation can thus be efficiently computed.
    \item There exist instances with identical binary, \XOS valuations where $0.5$-\APS allocation does not exist.
    \item Submodular functions with ternary marginals: computing \APS values approximately to a factor better than $5/6$ is \classNP-hard.
\end{itemize}

\subsection{Further Related Work}\label{sec:rel-work}
\paragraph{\APS}\APS was introduced in \cite{BabaioffEF21}, who also prove that \APS dominates \MMS for non-negative valuations. Further, they give $0.667$-\APS allocation for goods, and a $2$-\APS allocation for chores, under additive valuations. \cite{LiLW21} study the connection of the known notion of \Propx with \APS for chores, giving a $2$-\APS allocation here. \cite{ChakrabortySS22} compare \APS with other \textit{share based} notions for the case of agents with asymmetric entitlements. \cite{FeigeT22} study group fairness guarantees with \APS, under additive valuations. \cite{ben2023fair} recently gave a $\frac{1}{3}$-approximate algorithm for computing $\APS$ with submodular valuations for asymmetric agents i.e., when agents have different entitlements.

\paragraph{Matroid Rank Functions: } Rank functions of matroid are one of the fundamental set functions and the optimization of these functions has been studied in detail, see \cite{schrijver2003combinatorial}. \cite{benabbou2021finding} and \cite{barman2020existence} identify multiple domains where matroid rank functions show up naturally like fair allocation of public housing units. These functions have been studied in context of fair division, for other fairness and efficiency like Nash Social Welfare\footnote{Nash welfare is the geometric mean of agent's valuations} \cite{BabaioffEF21}  {\sf EF1}
\footnote{An agent values her bundle more than other agent's bundle up to removal of one (some) good}\cite{BabaioffEF21, benabbou2021finding}, \MMS \cite{barman2020existence} and combinations of these~\cite{ViswanathanZ22}. Notably, polynomial time algorithms that output the optimal allocations under all these fairness notions are known in the respective works.
\paragraph{Binary \XOS valuations}
Binary \XOS valuations generalize matroid rank functions and have been studied in fair division context. \cite{li2018fair} give an algorithm that gives a 0.3667-\MMS. \cite{barman2021approximating} give a 288-approximation algorithm for maximizing Nash social welfare under these valuations.

\section{Notation and Preliminaries}\label{sec:prelim}
\textit{Notation. $[k]$ denotes the set $\{1,2,\cdots,k-1,k\}.$}

\noindent
\textbf{Model.} We study the problem of fairly dividing a set of $m$ indivisible goods, among $n$ agents. 
Preferences of an agent $i \in [n]$ is defined by a valuation function 
$v_i: 2^{[m]} \rightarrow \mathbb{N}_{\ge 0}$ over the set of goods. We represent a fair division problem instance by $\Ins$.

\noindent
\textbf{Allocations.} An allocation, $\mathcal{A} \coloneqq (A_1, \ldots, A_n)$ is a partition of all the goods among the $n$ agents, i.e. for all $i, j \in [n]$ with $i \neq j$, $A_i \cap A_j = \emptyset$ and $\cup_{i \in [n]}A_i = [m]$. We denote the set of all allocations by $\Pi_{[n]}([m])$. We also define a \textit{partial allocation,} denoted by $\mathcal{P} = (P_1, \ldots, P_n),$ as a partition of any subset of goods, that is, where $P_i \cap P_j = \emptyset$ for all $i \neq j$ and $\cup_{i \in [n]} P_i \subseteq [m]$. Finally, we use the notion of \textit{non-wasteful allocations} also defined in \cite{barman2020existence}. These are allocations where the marginal utility of all the goods in every bundle is non-zero, that is, for such an allocation $\mathcal{A},$ $v(g \vert A_i\backslash \{g\}) > 0$ for all $i \in [n]$ and any $g \in A_i$. 

\noindent

We now define the fairness notions we use in this work. 
\subsection{Fairness Notions}
\paragraph{Any Price Share (\APS)} Let $\Price$ denote the simplex of price vectors over the set of goods $[m]$, formally, $\Price=\{(p_1,\dots,p_m)\ge 0\ |\ \sum_i p_i=1\}$. Informally, Any Price Share is the value that an agent can guarantee themselves at {\em any} price with the budget of $\frac{1}{n}$. 
Formally, for an instance $\Ins,$ the \APS value of agent $i$ is defined as,
\begin{equation}\label{def:aps-price}
    \APS_i^n([m]) \coloneqq \min_{p \in \Price} \max_{S \subseteq [m], p(S) \leq \frac{1}{n}} v_i(S)
\end{equation}
where $p(S)$ is the sum of prices of goods in $S$. We will refer $\APS_i^{[n]}([m])$ by $\APS_i$ when the qualifiers $n$ and $m$ are clear.

An alternate definition without using prices is as follows. 
\begin{definition}[Any Price Share]\label{def:aps-sets}
 The \APS value of an agent $i$ for an instance $\Ins$ is the solution of the following program.
\begin{align*}
& \APS_i = \max z \\ 
    \text{subject to: }& \sum_{T \subseteq [m]} \lambda_{T} = 1 \\
    & \lambda_T = 0 &&\forall T \text{ such that } v_i(T) < z \\
    & \sum_{T \subseteq [m]: j \in T} \lambda_T \leq \frac{1}{n} &&\forall j \in [m] \\
    & \lambda_T \geq 0 &&\forall {T \subseteq [m]}
\end{align*}
\end{definition}
Essentially, an agent must decide the maximum value $z$ that satisfies the following. They associate non-zero weights $\lambda_T$ to all the sets $T\subseteq [m]$ such that any set with a value less than $z$ has weight zero, the sum of the weights on all the sets is $1,$ and the total weight on any good, defined as the sum of weights of the sets containing the good, is at most $\frac{1}{n}$. This maximum value of $z$ is their \APS value.

Both of these definitions and their equivalence is stated in \cite{BabaioffEF21}. 

\paragraph{Maximin Share (\MMS)}The Maximin share (\MMS) value of an agent $i$ for an instance $\Ins$ is defined as the minimum value they can guarantee while partitioning all the goods into $n$ bundles, assuming they pick the worst bundle in any allocation. Formally,

\[\MMS^n_i([m]) = \displaystyle\max_{(A_1,\dots,A_n) \in \Pi_{[n]}([m])} \displaystyle\min_{k \in [n]} v_i(A_k).\]    

We refer to $\MMS_i^n([m])$ by $\MMS_i$ when the qualifiers $n$ and $m$ are clear. We will refer to the allocation that defines the \MMS value of any agent $i$, that is, $\argmax_{\Pi_{[n]}([m])}\min_{k \in [n]} v_i(A_k),$ as the \textit{\MMS defining allocation} of agent $i.$

\noindent
Note that both the \APS and \MMS values of an agent do not depend on the valuation functions of the other agents, and depend only on the number of agents in the fair allocation instance. 

The following relation between the \APS and \MMS values of any agent is known.
\begin{claim}\cite{BabaioffEF21}\label{clm:aps-dominates-mms}
For any monotone valuation function $v_i$ of agent $i$, we have $\APS_i \geq \MMS_i$.
\end{claim}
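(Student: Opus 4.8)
The plan is to exhibit a feasible solution to the linear program of Definition~\ref{def:aps-sets} that certifies the objective value $z = \MMS_i$, which immediately yields $\APS_i \ge \MMS_i$. I would work with the set-based formulation of \APS rather than the price-based one, since the former lets me convert a partition of the goods directly into fractional weights $\lambda_T$ without reasoning about adversarial prices.

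Concretely, let $\mu = \MMS_i$ and let $(A_1, \dots, A_n) \in \Pi_{[n]}([m])$ be the \MMS defining allocation of agent $i$, so that $v_i(A_k) \ge \mu$ for every $k \in [n]$, since $\mu$ is by definition the value of the worst bundle in this allocation. I would set $z = \mu$ and, for each subset $T \subseteq [m]$, define $\lambda_T = \tfrac{1}{n}\,\bigl|\{k \in [n] : A_k = T\}\bigr|$; equivalently, each of the $n$ bundles contributes weight $\tfrac{1}{n}$ to its own set, with weights aggregated when two bundles coincide.

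It then remains to verify the four constraints. The total weight is $\sum_{T} \lambda_T = \tfrac{1}{n}\cdot n = 1$. Every set receiving positive weight is one of the bundles $A_k$, each of which satisfies $v_i(A_k) \ge \mu = z$, so the requirement that $\lambda_T = 0$ whenever $v_i(T) < z$ is met, as is non-negativity. The only constraint that actually uses the structure of the partition is the per-good budget $\sum_{T \ni j} \lambda_T \le \tfrac{1}{n}$: because $(A_1,\dots,A_n)$ is a partition, each good $j$ lies in exactly one bundle, and that bundle is the only positively weighted set containing $j$, so the total weight placed on sets containing $j$ equals $\tfrac{1}{n}$, meeting the constraint with equality.

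The construction is essentially forced, so I do not anticipate a genuine obstacle; the single point meriting care is the possibility of repeated bundles. Two distinct coinciding bundles in a partition must both be empty, since any two nonempty bundles are disjoint and hence cannot be equal, and empty bundles contain no good, so aggregating their weight onto $\lambda_{\emptyset}$ never violates a per-good constraint. Thus the constructed weights $(\lambda_T)$ form a feasible solution with objective $\mu$, and therefore $\APS_i \ge \MMS_i$.
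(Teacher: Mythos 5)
Your proof is correct. Note that the paper itself does not prove this claim at all---it is imported by citation from \cite{BabaioffEF21}---so there is no internal proof to compare against; your argument (plugging the \MMS{} defining partition into the set-based LP of Definition~\ref{def:aps-sets} with weight $\tfrac{1}{n}$ per bundle) is the standard and essentially forced one, and your handling of the only delicate point, coinciding bundles, is right: two equal bundles in a partition must both be empty, the empty set triggers no per-good constraint, and if an empty bundle occurs at all then $\MMS_i \le v_i(\emptyset)$, so positive weight on $\emptyset$ does not violate the value constraint either.
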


At times we abuse notation, and refer as 
\textit{the \APS (or \MMS) value of a function $v(\cdot)$,} which essentially is a value of an agent whose valuation function is $v(\cdot)$. 

\subsection{Valuation Functions}
\noindent{\bf Binary marginals.}
For a valuation function $v(\cdot),$ the \textit{marginal utility} of a good $g \in [m]$ over a set $S \subseteq [m]$, denoted by $v(g \vert S),$ is defined as the increase in the total value of the bundle $S\cup \{g\}$ over the set $S,$ that is, $v(g \vert S)=v(S \cup \{g\}) - v(S)$. We consider valuation functions with \textit{binary} marginals, defined as those for whom the marginal utility of any good over any set is either $0$ or $1,$ that is, $v(g\vert S)\in \{0,1\}$ for every $g\in [m]$ and $S\subseteq [m].$ 

\noindent 
\textbf{Submodular functions.} 
A function is called \textit{submodular} if it satisfies the property of \textit{diminishing marginal returns}, which specifies that the marginal utility of any good $g$ over any subset of goods $S$ must not be larger than its marginal utility over any subset of $S.$ Formally, a function $v:2^{[m]}\rightarrow \mathbb{R}_{\ge 0}$ is called submodular if and only if,
$$v(g\vert S)\le v(g\vert S'), \ \forall g\in [m],\ S'\subseteq S\subseteq [m].$$

\noindent
\textbf{\XOS functions.} A function $v:2^{[m]}\rightarrow \mathbb{R}_{\ge 0}$ is called \textit{additive}, if the value of a set of goods is equal to the sum of values of the goods in the set, that is, $v(S) = \sum_{g\in S}v(\{g\}).$ A function $v:2^{[m]}\rightarrow \mathbb{R}_{\ge 0}$ is said to be \XOS, or \textit{fractionally subadditive,} if and only if there exists a family of additive set functions $\mathcal{F},$ such that the value of each subset $S\subseteq [m]$ is the maximum function value of $S$ from the functions in $\mathcal{F},$ that is, $v(S)=\max_{f\in\mathcal{F}}f(S).$ Note that the cardinality of the family $\mathcal{F}$ can be exponentially high in $m.$ 

We focus on submodular and \XOS functions with binary marginals. In case of the submodular functions, these are equivalent to 
what are called \textit{matroid rank functions (\MRFs)} (See Section \ref{sec:prelim-matroid} for a matroid based definition of these functions). 

For the submodular functions, we also study the case of \textit{ternary marginals,} where all the marginal values $v(g\vert S) \in \{0,1/2,1\}$ 
for all goods $g\in [m]$ and $S\subseteq [m].$

\subsection{Matroid Preliminaries}\label{sec:prelim-matroid}
\paragraph{Matroid.}\label{def:matroid} A matroid, denoted by $\M,$ is a tuple $(E, \I)$ where $E$ is a set of elements, called the \textit{ground set}, and $\I \subseteq 2^E$ is a collection of subsets of $E$ called the \textit{independent sets} of the matroid, that satisfies the following properties.
\begin{enumerate}
    \item If $S \in \I$ then $\overline{S} \in \I$ for all $\overline{S} \subseteq S$.
    \item If $I, J \in \I$ and $|I| > |J|$ then there exists $i \in \{I \setminus J\}$ such that $J \cup \{i\} \in \I$.
\end{enumerate}

\paragraph{Bases of a matroid.}Any independent set of the largest cardinality, that is any set $B\in \argmax_{I\in \I}|I|,$ is called a \textit{base} of the corresponding matroid. 

\paragraph{Rank of a matroid.} Every matroid $\M = (E, \I)$ has what is called a \textit{rank function} associated with it, that maps any subset $S$ of the ground set $E$ to a non-negative integer, equal to the size of the largest independent set that is a subset of $S$. We denote this function by $r_{\M}: 2^{E} \rightarrow \mathbb{Z}_{\geq 0}.$ Formally,
\begin{equation}
    r_{\M}(S) := \max_{I \subseteq S, I \in \I} |I|
\end{equation}
The \textit{rank} of a matroid $\M$ is the value $r_{\M}(E).$

\paragraph{Matroid rank functions and submodularity.} It is well known that the rank function of a matroid $\M=(E,\I)$ is equivalent to a submodular function on a set of $|E|$ items with binary marginals \cite{schrijver2003combinatorial}. That is, any submodular function on a set of $m$ items with binary marginals corresponds to a matroid with a ground set of $m$ elements, one corresponding to each item, referred as the \textit{underlying matroid} hence forth. The submodular function's value for any subset of the items is equal to the value of the rank function of the underlying matroid for the set of elements corresponding to the items.

\paragraph{Matroid Union.} The union function applied to a collection of matroids generates a matroid known as the \textit{union matroid}. Let $\M^{\cup k} = (E^{\cup k}, \I^{\cup k})$ denote the union of a collection of matroids $\M_1 = (E_1, \I_1), \ldots, \M_k = (E_k, \I_k).$ $\M^{\cup k}$ is defined as,
\begin{equation*}
    E^{\cup k} = \cup_{i \in [k]} E_i,\ \ \ \  \I^{\cup k} = \{\cup_{i \in [k]} I_i \vert I_i \in \I_i\}.
\end{equation*}
Essentially, for the union matroid, the ground set is the union of the ground sets of the matroids in the collection, and the independent sets are all possible sets formed by taking the union of one independent set from each underlying matroid. 

\paragraph{Rank function of a Union matroid.} The rank function of a union matroid $\M^{\cup k}$, denoted by $r_{\M^{\cup k}}(\cdot)$ or simply $r_{\M}(\cdot)$ when the underlying matroids are clear, has the following well known formula.
\begin{equation}
    r_{\M}(S) = \min_{T \subseteq S} \left[ |S \setminus T| + \sum_{i \in [k]} r_{\M_i}(T \cap E_i)  \right]
\end{equation}
Here for each $i\in [k],$ $r_{\M_i}(\cdot)$ is the rank function of the underlying matroid $\M_i$.

\paragraph{Union of copies of a matroid.} Given a matroid $\M=(E,\I),$ let $\M^n$ denote the union of $n$ copies of $\M.$ Let $r(\cdot), r(\M)$ and $r^n(\M^n)$ respectively denote the rank function of $\M,$ the rank of $\M,$ and the rank of $\M^n.$ The following properties relating these quantities are well known \cite{schrijver2003combinatorial}.
\begin{lemma}\label{lem:matroid-union-rank-n}
$r^n(\M^n) = n \cdot r(\M)$ if and only if for all subsets $T \subseteq E,$
\begin{equation}\label{eq:matroid-union-rank-n}
    |E \setminus T| \geq n \cdot [r(E) - r(T)].
\end{equation}
\end{lemma}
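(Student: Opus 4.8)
The plan is to specialize the union-matroid rank formula to the case of $n$ identical copies of $\M$ and then read off the equivalence directly. Since $\M^n$ is the union of $n$ copies of $\M$, every copy shares the same ground set $E$ and the same rank function $r(\cdot)$. Evaluating the rank formula for a union matroid at the full ground set $S = E$, and using $T \cap E_i = T$ and $r_{\M_i}(\cdot) = r(\cdot)$ for each $i \in [n]$, I would obtain
$$r^n(\M^n) = \min_{T \subseteq E}\left[\, |E \setminus T| + n\cdot r(T) \,\right].$$

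The key observation is that the choice $T = E$ makes the bracketed expression equal to $|E \setminus E| + n\cdot r(E) = n\cdot r(E)$, so the minimum is always at most $n\cdot r(E)$; that is, $r^n(\M^n) \le n\, r(E)$ holds unconditionally. Hence the equality $r^n(\M^n) = n\, r(E)$ is equivalent to the statement that no other subset $T$ drives the bracketed quantity strictly below $n\, r(E)$, i.e.\ that $|E \setminus T| + n\, r(T) \ge n\, r(E)$ for every $T \subseteq E$. Rearranging this inequality gives precisely $|E \setminus T| \ge n\,[r(E) - r(T)]$, which establishes both directions at once.

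Concretely, for the forward direction I would argue contrapositively: if some $T$ violates $|E\setminus T| \ge n\,[r(E)-r(T)]$, then that same $T$ yields a value strictly below $n\, r(E)$ in the minimization, forcing $r^n(\M^n) < n\, r(E)$. For the reverse direction, if the inequality holds for all $T$, then every term in the minimization is at least $n\, r(E)$, while $T = E$ attains this bound, so the minimum equals $n\, r(E)$.

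The main obstacle---really the only subtle point---is the correct instantiation of the union-matroid rank formula when the copies share a common ground set rather than being disjoint: one must check that $E^{\cup n} = E$, that $T \cap E_i$ collapses to $T$, and that the sum over the $n$ identical rank functions contributes exactly $n\, r(T)$. Once this reduction is in place, the lemma follows from the elementary rearrangement above.
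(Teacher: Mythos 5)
Your proof is correct. Note, however, that the paper itself offers no proof of this lemma at all: it is stated as a known fact with a citation to Schrijver's book, so there is no ``paper proof'' to match. What you have done is make the statement self-contained by deriving it from the union-matroid rank formula
\begin{equation*}
    r_{\M^{\cup k}}(S) = \min_{T \subseteq S} \Bigl[\, |S \setminus T| + \sum_{i \in [k]} r_{\M_i}(T \cap E_i) \,\Bigr],
\end{equation*}
which the paper does state in its preliminaries (this is Edmonds' matroid union theorem, itself nontrivial, but legitimate to invoke here). Your specialization is handled correctly: with $n$ copies on the common ground set $E$ one has $E_i = E$, $T \cap E_i = T$, and $r_{\M_i} = r$, giving $r^n(\M^n) = \min_{T \subseteq E}\bigl[\,|E\setminus T| + n\, r(T)\,\bigr]$; the choice $T = E$ shows the minimum never exceeds $n\, r(E)$, so equality $r^n(\M^n) = n\, r(\M)$ holds precisely when every $T$ satisfies $|E\setminus T| + n\, r(T) \ge n\, r(E)$, which rearranges to the stated inequality. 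Both directions of the equivalence fall out of this single observation, exactly as you argue. The trade-off between the two treatments is the usual one: the paper's citation keeps the exposition short, while your derivation exposes that the lemma is nothing more than the union rank formula evaluated at $S = E$ plus the fact that $T = E$ is always a feasible minimizer---a useful thing to see, since it also makes clear why the companion Lemma 2.3 (existence of $n$ disjoint bases) requires genuinely more work than this one.
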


\begin{lemma}\label{lem:matroid-disjoint-bases}
If $r^n(\M^n) = n \cdot r(\M)$, then $\M$ has at least $n$ disjoint bases.
\end{lemma}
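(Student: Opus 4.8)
The plan is to extract, from the rank identity $r^n(\M^n) = n\cdot r(\M)$, a maximum independent set of the union matroid and to show that its guaranteed decomposition into $n$ independent sets of $\M$ must actually be a decomposition into $n$ \emph{disjoint} bases. The whole argument is a tightness/counting argument, so I only need the easy implication from the rank identity to base packing, and can avoid invoking the full matroid-union (base-packing) theorem.

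First I would recall that $r^n(\M^n)$ is by definition the size of a largest independent set $S$ of $\M^n$, so the hypothesis gives $|S| = n\cdot r(\M)$. By the definition of the union of $n$ copies of $\M$, every independent set of $\M^n$ — in particular $S$ — can be written as $S = I_1 \cup \cdots \cup I_n$ with each $I_j \in \I$ (if a decomposition uses fewer than $n$ sets we simply pad it with copies of $\emptyset$, which is independent). Each $I_j$ is independent in $\M$, hence $|I_j| \le r(\M)$.

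Now comes the counting step, which forces disjointness and maximality simultaneously:
\[
n\cdot r(\M) \;=\; |S| \;=\; |I_1 \cup \cdots \cup I_n| \;\le\; \sum_{j=1}^n |I_j| \;\le\; \sum_{j=1}^n r(\M) \;=\; n\cdot r(\M).
\]
Since the first and last quantities coincide, every inequality in the chain is an equality. Tightness of $|I_1 \cup \cdots \cup I_n| \le \sum_j |I_j|$ forces the sets $I_1,\dots,I_n$ to be pairwise disjoint, and tightness of $|I_j| \le r(\M)$ forces each $I_j$ to be a maximum independent set, i.e.\ a base of $\M$. Thus $I_1,\dots,I_n$ are $n$ pairwise disjoint bases, which is exactly the claim.

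I do not expect a genuine obstacle: the argument rests only on the definition of the independent sets of the union matroid and a single double-counting inequality. The one point that deserves a careful sentence is the justification that a maximum independent set of $\M^n$ decomposes into exactly $n$ independent sets of $\M$ rather than fewer, which follows immediately from the form $\I^{\cup n} = \{\cup_i I_i \mid I_i \in \I\}$ together with $\emptyset \in \I$.
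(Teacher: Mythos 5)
Your proof is correct, but note that the paper itself never proves this lemma: it is stated as a ``well known'' fact with a citation to Schrijver, where it belongs to the matroid union / base packing machinery (Tutte--Nash-Williams). Your argument supplies the missing elementary derivation, and it is the right one. Since the paper defines $\I^{\cup n} = \{\cup_{i\in[n]} I_i \mid I_i \in \I\}$ and defines the rank of a matroid as the maximum cardinality of an independent set, a maximum independent set $S$ of $\M^n$ with $|S| = n\cdot r(\M)$ decomposes by definition into $I_1,\dots,I_n \in \I$, and the chain
\[
n\cdot r(\M) \;=\; |S| \;=\; \bigl|I_1 \cup \cdots \cup I_n\bigr| \;\le\; \sum_{j=1}^n |I_j| \;\le\; n\cdot r(\M)
\]
collapses to equalities: every element of $S$ lies in exactly one $I_j$ (pairwise disjointness) and $|I_j| = r(\M)$ for each $j$, so each $I_j$ is a base. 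Your parenthetical about padding with $\emptyset$ is harmless but unnecessary, since the definition of $\I^{\cup n}$ already produces exactly $n$ (possibly empty) sets. What your route buys is self-containedness: you use only the definition of the union matroid's independent sets plus one double-counting inequality, and you avoid both the rank formula for the union matroid and the full base packing theorem. What the citation route buys the paper is brevity, and the companion fact (Lemma \ref{lem:matroid-union-rank-n}) from the same source; indeed, your argument is essentially the final step of the standard proof of base packing as it appears in Schrijver, made explicit.
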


\section{Submodular Valuations with Binary Marginals (Matroid Rank Functions)}\label{sec:mrf}
In this section, we will prove Theorem \ref{thm:matroid-rank}. As a corollary, we get the computational result of Theorem \ref{thm:mrfs-computation}.

\begin{theorem}\label{thm:matroid-rank}
If the valuation function $v_i$ of agent $i$ is a submodular function with binary marginals, a.k.a. matroid rank function, then their $\APS$ and $\MMS$ values are equal, {\em i.e.,} $\MMS_i=\APS_i.$
\end{theorem}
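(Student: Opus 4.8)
The plan is to show both inequalities. Since Claim~\ref{clm:aps-dominates-mms} already gives $\APS_i \geq \MMS_i$ for any monotone valuation, the entire work lies in proving the reverse inequality $\APS_i \leq \MMS_i$ for matroid rank functions. Let $\M = (E, \I)$ be the underlying matroid of $v_i$, with rank function $r(\cdot)$. The key structural fact I would exploit is Lemma~\ref{lem:matroid-disjoint-bases}: if the $n$-fold union matroid $\M^n$ has rank $n \cdot r(\M)$, then $\M$ admits $n$ disjoint bases, which immediately yields an \MMS-defining partition in which every agent receives a full-value base, so $\MMS_i = r(\M)$. The natural strategy is therefore to split into two cases according to whether the rank condition of Lemma~\ref{lem:matroid-union-rank-n} holds.

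\textbf{Case 1 (the condition of Lemma~\ref{lem:matroid-union-rank-n} holds).} Here $r^n(\M^n) = n \cdot r(\M)$, so by Lemma~\ref{lem:matroid-disjoint-bases} there are $n$ disjoint bases $B_1, \dots, B_n$, each of value $r(\M)$, giving $\MMS_i \geq r(\M)$; since $r(\M)$ is the maximum possible value of any bundle, in fact $\MMS_i = r(\M)$. It then suffices to show $\APS_i \leq r(\M)$, which is immediate because $\APS_i \leq \max_S v_i(S) = r(\M)$. So in this case $\MMS_i = \APS_i = r(\M)$ and we are done.

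\textbf{Case 2 (the condition fails).} Now there exists a ``deficient'' set $T \subseteq E$ with $|E \setminus T| < n\cdot[r(E) - r(T)]$. The idea is to use such a $T$ to build a \emph{price vector} witnessing that $\APS_i$ is small, matching $\MMS_i$. Intuitively, a set $T$ that is hard to cover $n$ times forces the agent, under an adversarial price, to pay heavily for the marginal value beyond $r(T)$. Concretely I would put uniform positive price on the ground-set elements outside $T$ (the elements in $E \setminus T$ are the ``cheap-to-exhaust'' resource limiting how much high-value structure can be bought) and zero or carefully scaled price inside $T$, chosen so that the budget $1/n$ caps the attainable rank. Using the submodularity/rank inequality $v_i(S) \leq |S \setminus T| + r(T \cap \text{[the relevant elements]})$ together with the counting bound from the failed condition, one shows that no set purchasable within budget $1/n$ can have value exceeding $r(T) + \lfloor |E\setminus T|/n \rfloor$, and that this same quantity is achievable as an \MMS value by a recursive/restriction argument on the contracted matroid $\M / T$ and the restricted matroid $\M | T$.

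\textbf{The main obstacle} I anticipate is Case 2: making the price construction tight enough that the \APS upper bound exactly equals the \MMS value rather than merely being close. The cleanest route is probably an induction on $|E|$ (or on the rank), reducing via the deficient set $T$ to two strictly smaller matroid instances---the restriction $\M|T$ and the contraction $\M/T$---where the rank condition can be assumed to hold, and then arguing that both \MMS and \APS decompose additively across this split (the \MMS side by combining disjoint bases of the two pieces, the \APS side by combining the witnessing prices). Verifying that the \APS program's optimal price respects this decomposition, and that the fractional packing constraint $\sum_{T' \ni j}\lambda_{T'} \leq 1/n$ interacts correctly with the matroid union formula, is where the delicate bookkeeping lies; the submodular structure (equivalently, the matroid exchange axiom) is exactly what guarantees the two bounds meet.
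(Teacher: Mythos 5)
Your Case 1 is correct, and the pricing half of Case 2 is also correct: for any $T$, pricing each element of $E\setminus T$ at $1/|E\setminus T|$ and each element of $T$ at $0$ gives $\APS_i \le r(T)+\lfloor |E\setminus T|/n\rfloor$. The gap is everything after that. First, for an \emph{arbitrary} deficient set $T$ this quantity is generally not attainable as an \MMS value, so your two bounds do not meet: take the free matroid on $E=\{e_1,e_2,e_3\}$ (i.e., $v_i(S)=|S|$) with $n=2$. The set $T=\{e_1\}$ is deficient, since $|E\setminus T|=2 < n[r(E)-r(T)]=4$, yet $r(T)+\lfloor|E\setminus T|/n\rfloor=2$ while $\MMS_i=\APS_i=1$; your argument never forces the choice of $T$ for which the bound is tight. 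Second, the recursive repair you propose fails: the inequality you would need is $\APS_i(\M)\le \APS_i(\M|T)+\APS_i(\M/T)$, and it is false. Take the free matroid on $E=\{e_1,e_2\}$ with $n=2$ and the deficient set $T=\{e_1\}$: both $\M|T$ and $\M/T$ are single-element free matroids, whose \MMS and \APS values with two agents are $0$, while $\APS_i(\M)=1$. Indeed, ``combining the witnessing prices'' cannot work here: under any convex combination of the two witness prices, one of the two elements costs at most $1/2$, so the agent always secures value $1$. Thus Case 2 cannot be closed with the tools you list.

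The missing idea is truncation (the paper's ``capping,'' Lemma \ref{lem:hat-v-mrf}). The repair within your framework: let $k^\ast=\min_{T\subseteq E}\bigl[r(T)+\lfloor|E\setminus T|/n\rfloor\bigr]$, attained by some $T^\ast$. Your pricing applied to $T^\ast$ gives $\APS_i\le k^\ast$ (if $T^\ast=E$ this is just $\APS_i\le r(E)$, which is trivial). For the reverse direction, truncate $\M$ at rank $k^\ast$, i.e., keep only the independent sets of size at most $k^\ast$; this is again a matroid, with rank function $\min\{r(\cdot),k^\ast\}$. Minimality of $k^\ast$ implies the truncated matroid satisfies condition \eqref{eq:matroid-union-rank-n} for \emph{every} $T$: when $r(T)<k^\ast$ the condition reads $k^\ast\le r(T)+|E\setminus T|/n$, which follows from $k^\ast\le r(T)+\lfloor|E\setminus T|/n\rfloor$, and when $r(T)\ge k^\ast$ it is vacuous. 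Lemmas \ref{lem:matroid-union-rank-n} and \ref{lem:matroid-disjoint-bases} then yield $n$ disjoint bases of the truncation, each of size $k^\ast$ and independent in $\M$, so $\MMS_i\ge k^\ast\ge\APS_i$, and Claim \ref{clm:aps-dominates-mms} finishes the proof; note that this removes your case split entirely. The paper's own proof is the same truncation route, except it caps at $\APS_i$ itself rather than at $k^\ast$, and verifies \eqref{eq:matroid-union-rank-n} for the capped matroid directly from the LP definition of \APS (Definition \ref{def:aps-sets}) via a weight-counting argument (Lemma \ref{lem:eq-matroid-rank-in-terms-of-aps}).
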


\noindent 
\textbf{Proof Idea.} Recall that such a $v_i$ is a {\em matroid rank function}. To prove this theorem, we consider the underlying matroid of the valuation function $v_i.$ We use the set based definition of $\APS_i,$ Definition \ref{def:aps-sets}, to show that equation \eqref{eq:matroid-union-rank-n} is true for $\M$. This is the most crucial and technically involved step in the proof. With this equation, it then follows from Lemma \ref{lem:matroid-union-rank-n} that the ranks of $\M$ and the union matroid of $n$ copies of $\M,$ say respectively $r(\M)$ and $r^n(\M^n),$ satisfy $r^n(\M^n)=n \cdot  r(\M).$ Lemma \ref{lem:matroid-disjoint-bases} then implies that $\M$ has at least $n$ disjoint bases. These bases translate to bundles of the goods in the fair allocation instance such that the value of $i$ for each base is at least equal to $r(\M).$ From the definition of $\MMS,$ $\MMS_i$ is thus at least $r(\M).$ Finally we show $r(\M)$ is equal to $\APS_i.$ Combining with Claim \ref{clm:aps-dominates-mms} proves the Theorem.

In the remaining section we discuss the proof in detail, using the above notations. . 

A key notion towards establishing equation \eqref{eq:matroid-union-rank-n} is \textit{capping the valuation function} of $i.$ Using $v_i(\cdot)$, we define a new function $\widehat{v}_i(\cdot)$ as,
\begin{equation}\label{eqn:hat-v}
    \widehat{v}_i(S) = \min \{v_i(S), \APS_i\}
\end{equation}
We first claim that capping $v_i$ maintains the matroid rank property. 
\begin{restatable}{lemma}{LemHatV}
\label{lem:hat-v-mrf}
If $v_i$ is an \MRF then $\widehat{v}_i$ as defined in Equation \ref{eqn:hat-v} is also an \MRF.
\end{restatable}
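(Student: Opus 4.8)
The plan is to show that capping a matroid rank function $v_i$ at the integer threshold $\APS_i$ yields another matroid rank function $\widehat{v}_i$. I would first verify that $\widehat{v}_i$ is nonnegative, monotone, and integer-valued, which is immediate since it is a pointwise minimum of $v_i$ (an integer-valued monotone function) and the constant $\APS_i$; I should note that $\APS_i$ is an integer here, which follows because $v_i$ takes integer values and the program in Definition \ref{def:aps-sets} can be shown to have an integral optimum $z$, or more simply because I only need $\APS_i$ to be a nonnegative integer for the capping to preserve binary marginals. The cleanest route is to prove that $\widehat{v}_i$ is \emph{submodular with binary marginals}, since by the correspondence stated in Section \ref{sec:prelim-matroid} this is exactly equivalent to being a matroid rank function.

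The heart of the argument is to check the two defining properties of a rank function directly, or equivalently to verify submodularity plus binary marginals. For the binary-marginals part, I would compute $\widehat{v}_i(g \mid S) = \min\{v_i(S \cup \{g\}), \APS_i\} - \min\{v_i(S), \APS_i\}$. Since $v_i$ has binary marginals we have $v_i(S \cup \{g\}) \in \{v_i(S), v_i(S)+1\}$, and a short case analysis on whether $v_i(S)$ and $v_i(S\cup\{g\})$ lie below, at, or above the cap $\APS_i$ shows that this difference is always $0$ or $1$; the only nontrivial case is $v_i(S) = \APS_i$ and $v_i(S\cup\{g\}) = \APS_i + 1$, where the marginal is $0$, and the case $v_i(S)=\APS_i-1$, $v_i(S\cup\{g\})=\APS_i$, where it is $1$. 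For submodularity I would verify $\widehat{v}_i(g \mid S) \le \widehat{v}_i(g \mid S')$ for $S' \subseteq S$: using monotonicity of $v_i$ we have $v_i(S') \le v_i(S)$, so the capped marginal over the larger set $S$ can only be smaller, and combining this with the diminishing-returns property of the uncapped $v_i$ closes the inequality. A clean way to organize this is to observe that the function $t \mapsto \min\{t, c\}$ is concave and nondecreasing, and that composing a submodular function with a concave nondecreasing univariate function preserves submodularity; this gives submodularity of $\widehat{v}_i$ in one line, leaving only the binary-marginals check.

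The main obstacle I anticipate is handling the boundary cases of the marginal computation carefully and, more subtly, making sure $\APS_i$ is genuinely a nonnegative integer so that the cap lands on an attainable value of $v_i$ and does not break the binary-marginal structure. If $\APS_i$ were fractional, the marginal $\widehat{v}_i(g\mid S)$ could in principle be a fractional number strictly between $0$ and $1$ at the threshold crossing, so establishing integrality of $\APS_i$ is a genuine prerequisite rather than a formality. I would therefore include a brief lemma or remark that $\APS_i \in \mathbb{Z}_{\ge 0}$ whenever $v_i$ is integer-valued: the inner maximization in Equation \eqref{def:aps-price} always returns an integer value of $v_i$, and the outer minimization over the price simplex is attained at a vertex of a rational polytope, forcing the optimal $z$ to be one of the integer values taken by $v_i$. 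With integrality in hand, the case analysis for binary marginals is completely routine, and the submodularity is immediate from the concave-composition observation, so the whole proof is short once integrality is secured.
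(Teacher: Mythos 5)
Your proof is correct and follows the same overall skeleton as the paper's: both establish that $\widehat{v}_i$ is monotone, submodular, and has marginals in $\{0,1\}$, and then invoke the stated equivalence between such functions and matroid rank functions. Two points of difference are worth recording. First, for submodularity the paper performs a direct case analysis on whether the relevant values sit above or below the cap, whereas you invoke the fact that a concave nondecreasing transformation of a submodular function is submodular. This is a genuine shortcut, but state it carefully: the claim is false for general submodular functions (truncating a modular function that is decreasing in some element at a constant already breaks submodularity); the correct statement requires $v_i$ to be \emph{monotone} submodular, which holds here since $v_i$ is an \MRF. Second, your insistence on the integrality of $\APS_i$ is a good catch rather than pedantry: the paper's own argument only shows $\widehat{v}_i(g \mid S) \le 1$ and silently uses integrality of the cap to conclude that the marginal lies in $\{0,1\}$; if $\APS_i$ could be fractional, $\widehat{v}_i$ would have a fractional marginal at the threshold crossing and the lemma would fail. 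Your proposed justification of integrality (vertices of a rational polytope) is more machinery than needed, and not quite the right argument; a cleaner one is that in the price-based definition \eqref{def:aps-price} the inner maximum always lies in the finite set $\{v_i(S) : S \subseteq [m]\} \subseteq \mathbb{Z}_{\ge 0}$, so the outer infimum is attained at one of these integer values (equivalently, in Definition \ref{def:aps-sets}, feasibility of a given $z$ depends only on the collection $\{T : v_i(T) \ge z\}$, which changes only as $z$ crosses integer values of $v_i$). With the monotonicity hypothesis made explicit and the integrality argument tightened, your proof is complete.
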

\begin{proof}
We will show that $\widehat{v}_i(\cdot)$ is a submodular function with binary marginals, hence equivalently is an \MRF. 

Consider any good $g \in [m]$ and set $S \subseteq [m]$. We have,
\begin{align*}
    \widehat{v}_i(g \vert S) &= \widehat{v}_i(g \cup S) - \widehat{v}_i(S) \\
    &= \min\{v_i(g \cup S), \APS_i\} - \min \{v_i(S), \APS_i\}.
\end{align*}
Now if $\min \{v_i(S), \APS_i\} = \APS_i,$ then by the monotonicity of $v_i(\cdot)$, $\min \{v_i(g \cup S), \APS_i\} = \APS_i,$ implying their difference is zero, and $\widehat{v}_i(g \vert S)=0$. 

Otherwise, if $v_i(S) < \APS_i,$ then $v_i(g \cup S)$ is at most $1$ more than $v_i(S),$ as $v_i(\cdot)$ has binary marginals. Therefore, $\min\{v_i(g \cup S), \APS_i\}$ is also at most $1$ more than $v_i(S)=\min \{v_i(S), \APS_i\},$ their difference is at most one, hence $\widehat{v}_i(g \vert S)\le 1.$ This shows that $\widehat{v}_i(\cdot)$ has binary marginals. 

It is left to show that $\widehat{v}_i(\cdot)$ is submodular. We consider any set $S'$ that is a superset of $S,$ and show $\widehat{v}_i(g\vert S)\ge \widehat{v}_i(g\vert S').$ 

If $g\in S,$ then both of these values are zero. Also when $\widehat{v}_i(g\vert S)=1,$ then as $\widehat{v}_i(\cdot)$ has binary marginals, the inequality follows easily. Finally, suppose $g\notin S,$ and $\widehat{v}_i(g\vert S)=0.$ Similarly as $\widehat{v}_i(g\vert S),$ we have,
\begin{align*}
    \widehat{v}_i(g|S')
    = \min\{v_i(g \cup S'), \APS_i\} - \min\{v_i(S'), \APS_i\}
\end{align*}
Here if $v_i(S)\ge \APS_i,$ then from the monotonicity of $v_i,$ all the terms  $v_i(S'\cup \{g\}), v_i(S'), v_i(S\cup \{g\})\ge \APS_i,$ and both the marginal utilities $\widehat{v}_i(g\vert S)$ and $\widehat{v}_i(g\vert S')$ are zero. 

Otherwise, when $v_i(S)<\APS_i,$ then as $\widehat{v}_i(g\vert S)=0,$ we have $\min \{v_i(g\cup S),\APS_i\}-v_i(S)=0.$ Again as  $v_i(S)<\APS_i,$ $\min \{v_i(g\cup S),\APS_i\}=v_i(g\cup S).$ Therefore, $\widehat{v}_i(g\vert S)=v_i(g\vert S),$ and $v_i(g\vert S)$ is also zero. As $v_i(\cdot)$ is submodular, $v_i(g\vert S)\ge v_i(g\vert S').$ As $v_i(\cdot)$ has binary marginals, $v_i(g\vert S')=0.$ We have,
\begin{align*}
    v_i(g\vert S')&=v_i(g\cup S')-v_i(S')\\ 
    &\ge \min\{v_i(g\cup S'),\APS_i\}-\min\{v_i(S'),\APS_i\}\\
    &=\widehat{v}_i(g\vert S').
\end{align*}
Thus $\widehat{v}_i(g\vert S')=0,$ and therefore $\widehat{v}_i(g\vert S')=\widehat{v}_i(g\vert S).$
\end{proof}

Next, we relate the $\APS$ values of $i$ under $\widehat{v}_i$ to their corresponding values under $v_i.$ Let $\APS$ value of $i$ under $\widehat{v}_i$ be $\overline{\APS}_i$.
\begin{restatable}{claim}{ClmAPSVHatV}
\label{clm:aps-hat-v}
$\overline{\APS}_i = \APS_i.$
\end{restatable}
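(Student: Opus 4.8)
The plan is to show $\overline{\APS}_i = \APS_i$ by proving both inequalities, using the set-based formulation from Definition~\ref{def:aps-sets}. The key structural fact I would exploit is that $\widehat{v}_i$ agrees with $v_i$ on the threshold of interest: since $\APS_i$ is achievable under $v_i$, there exist sets of value exactly $\APS_i$ under $v_i$, and for all such sets $\widehat{v}_i$ also assigns value $\APS_i$ by construction (the cap does not destroy the target value). More generally, for any set $S$ with $v_i(S) \ge \APS_i$ we have $\widehat{v}_i(S) = \APS_i$, and for any set $S$ with $v_i(S) < \APS_i$ we have $\widehat{v}_i(S) = v_i(S)$.

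For the direction $\overline{\APS}_i \ge \APS_i$, I would take the optimal fractional solution $(\lambda_T)$ witnessing $\APS_i$ for $v_i$ with objective value $z = \APS_i$. By the constraints, every $T$ with $\lambda_T > 0$ satisfies $v_i(T) \ge z = \APS_i$, hence $\widehat{v}_i(T) = \APS_i \ge z$. Therefore the very same weight assignment $(\lambda_T)$ is feasible for the program defining $\overline{\APS}_i$ with the same threshold $z$, which shows $\overline{\APS}_i \ge \APS_i$. The per-good constraints $\sum_{T \ni j} \lambda_T \le \tfrac{1}{n}$ and the normalization $\sum_T \lambda_T = 1$ are unchanged since they do not reference the valuation, so feasibility transfers verbatim.

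For the reverse direction $\overline{\APS}_i \le \APS_i$, the point is simply that capping cannot increase any set's value: $\widehat{v}_i(S) \le v_i(S)$ for every $S$. I would argue via the price-based definition \eqref{def:aps-price}, or equivalently observe that any feasible solution to the program for $\overline{\APS}_i$ with threshold $z$ requires $\widehat{v}_i(T) \ge z$ for all $T$ in the support, whence $v_i(T) \ge \widehat{v}_i(T) \ge z$, so the same solution is feasible for the $v_i$-program with the same threshold, giving $\APS_i \ge \overline{\APS}_i$. Combined with the first inequality this yields equality. The only mild subtlety — and the step I would state carefully rather than the main obstacle — is confirming that the threshold value $z = \APS_i$ is actually attained by the program for $v_i$ (so that the support sets genuinely have $v_i$-value at least $\APS_i$); this follows from the definition of $\APS_i$ as the optimum of that program. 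Since both transfers of feasibility are symmetric and use only the pointwise sandwich $\widehat{v}_i \le v_i$ together with $\widehat{v}_i = \APS_i$ on the relevant support, there is no genuinely hard step here; the claim is essentially a clean bookkeeping argument, and the real work has already been done in Lemma~\ref{lem:hat-v-mrf}.
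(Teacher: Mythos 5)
Your proof is correct and takes essentially the same approach as the paper: the direction $\overline{\APS}_i \ge \APS_i$ transfers the optimal feasible solution of the set-based program for $v_i$ to the program for $\widehat{v}_i$, using that every support set retains value $\APS_i$ after capping, and the direction $\overline{\APS}_i \le \APS_i$ follows because capping cannot increase any set's value. Your program-based alternative for the second direction is only a cosmetic variation on the paper's price-based argument, so there is nothing genuinely different to compare.
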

\begin{proof}
First note that since $\widehat{v}_i$ is capped at $\APS_i$, the value of any set, in particular the best bundle she can afford at any price vector, cannot be greater than $\APS_i$. Thus, from the price based definition of $\APS$ shown in equation \eqref{def:aps-price}, $\overline{\APS}_i \leq \APS_i$.

To see the other direction, consider Definition \ref{def:aps-sets} of $\APS$. Since the $\APS$ value of the function $v_i$ is $\APS_i$, there exist some $k$ sets $\Set=\{S_1, \ldots, S_k\}$ each of value at least $\APS_i$ under $v_i,$ and corresponding weights $\Lambda=\{\lambda_1, \ldots, \lambda_k\}$ that satisfy the constraints in Definition \ref{def:aps-sets}. Now, even under $\widehat{v}_i$, these sets have value at least $\APS_i$, in fact, exactly $\APS_i$. We show that the sets $\Set$ and their weights $\Lambda$ form a feasible solution to the program of Definition \ref{def:aps-sets} for $z=\APS_i,$ even under $\widehat{v}_i.$ 

The only constraints that depend on the valuation function are, $\lambda_T=0$ for all $T$ where $\widehat{v}_i(T)\le z.$ As we have fixed $z=\APS_i,$ and as $\widehat{v}_i(T)\le v_i(T)$ for any set $T,$ these constraints hold. The remaining constraints hold trivially. Therefore, $\overline{\APS}_i \geq z= \APS_i.$
\end{proof}
Analogously, we relate the $\MMS$ values of $i$ under $\widehat{v}_i$ to their corresponding values under $v_i.$ Let $\MMS$ value of $i$ under $\widehat{v}_i$ be $\overline{\MMS}_i$.
\begin{restatable}{claim}{ClmVHatVMMS}
\label{clm:mmm-at least-mms-hat}
$\MMS_i \geq \overline{\MMS}_i.$
\end{restatable}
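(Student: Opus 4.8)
The plan is to prove $\MMS_i \geq \overline{\MMS}_i$ directly from the definition of the maximin share, by exhibiting an allocation that witnesses the $\MMS_i$ value and is at least as good under $v_i$ as the optimal $\overline{\MMS}_i$-witnessing allocation is under $\widehat{v}_i$. The key observation is that capping only decreases values pointwise, i.e., $\widehat{v}_i(S) = \min\{v_i(S), \APS_i\} \leq v_i(S)$ for every $S \subseteq [m]$.

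First I would take an \MMS defining allocation for $\widehat{v}_i$, say $\mathcal{A} = (A_1, \ldots, A_n) \in \Pi_{[n]}([m])$, so that by definition $\overline{\MMS}_i = \min_{k \in [n]} \widehat{v}_i(A_k)$. Since this same allocation $\mathcal{A}$ is a feasible partition in the optimization defining $\MMS_i$ under $v_i$, it follows that
\begin{equation*}
    \MMS_i = \max_{(B_1,\dots,B_n) \in \Pi_{[n]}([m])} \min_{k \in [n]} v_i(B_k) \geq \min_{k \in [n]} v_i(A_k).
\end{equation*}
Then I would apply the pointwise inequality $v_i(A_k) \geq \widehat{v}_i(A_k)$ to each bundle, which gives $\min_{k \in [n]} v_i(A_k) \geq \min_{k \in [n]} \widehat{v}_i(A_k) = \overline{\MMS}_i$. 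Chaining these two inequalities yields $\MMS_i \geq \overline{\MMS}_i$, as desired.

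There is essentially no serious obstacle here: the result is a one-directional monotonicity statement, and the entire argument rests on the fact that capping a valuation can only lower the value of every bundle, combined with the fact that the maximin share is monotone in the underlying valuation. The only thing to be careful about is matching the quantifiers correctly, namely that the $\MMS_i$ maximization ranges over all allocations (so we are free to plug in the particular allocation $\mathcal{A}$ that is optimal for the capped function), and that taking the minimum over bundles preserves the pointwise inequality. Unlike the reverse direction handled in Claim \ref{clm:aps-hat-v}, we do not expect equality to hold in general, so no tightness or feasibility construction is needed.
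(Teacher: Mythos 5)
Your proof is correct and matches the paper's argument exactly: both take an \MMS defining allocation for $\widehat{v}_i$, use the pointwise inequality $v_i(S) \geq \widehat{v}_i(S)$ on each bundle, and then invoke the maximization in the definition of $\MMS_i$ to conclude $\MMS_i \geq \overline{\MMS}_i$. No gaps.
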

\begin{proof}
Consider any $\MMS$ defining allocation $\PA$ under $\widehat{v}_i$. The minimum valued bundle in this allocation has value $\overline{\MMS}_i$ according to $\widehat{v}_i$. By definition of $\widehat{v}_i$, $v_i(S) \geq \widehat{v}_i(S)$ for every set $S \subseteq [m]$. Thus, the same allocation has a value at least $\overline{\MMS}_i$ even under $v_i$. From the definition of $\MMS,$ $\MMS_i$ is at least equal to the minimum bundle's value under $v_i$ from the allocation $\PA,$ hence at least $\overline{\MMS}_i.$
\end{proof}

Let $\M_{\widehat{v}}$ denote the underlying matroid of the function $\widehat{v}_i,$ and let $\M_{\widehat{v}}^{n}$ be the union matroid of $n$ copies of $\M_{\widehat{v}}$. 

\begin{lemma}\label{lem:rank-equal-to-aps}
$r({\M_{\widehat{v}}}) = \APS_i.$
\end{lemma}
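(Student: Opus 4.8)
The plan is to compute the rank of $\M_{\widehat{v}}$ directly by evaluating its rank function on the full ground set, and then to show that this value collapses to $\APS_i$ because the capping threshold $\APS_i$ never exceeds the value of the grand bundle.

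First I would invoke the matroid--MRF correspondence recalled in Section \ref{sec:prelim-matroid}: since $\widehat{v}_i$ is an \MRF by Lemma \ref{lem:hat-v-mrf}, it is precisely the rank function of its underlying matroid $\M_{\widehat{v}}$, whose ground set we identify with the set of goods $[m]$. By the definition of the rank of a matroid as the value of its rank function on the entire ground set, this yields $r(\M_{\widehat{v}}) = r_{\M_{\widehat{v}}}([m]) = \widehat{v}_i([m])$.

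Next I would unfold the definition of the capped valuation to obtain $\widehat{v}_i([m]) = \min\{v_i([m]), \APS_i\}$. It therefore suffices to prove the single inequality $\APS_i \le v_i([m])$, since then the minimum is attained at $\APS_i$, giving $\widehat{v}_i([m]) = \APS_i$ and hence $r(\M_{\widehat{v}}) = \APS_i$ as claimed. This inequality follows immediately from the price-based definition in equation \eqref{def:aps-price} together with the monotonicity of $v_i$: for \emph{any} price vector $p \in \Price$ and any affordable set $S$ we have $v_i(S) \le v_i([m])$, so the inner maximum is bounded by $v_i([m])$, and taking the minimum over $p$ preserves the bound.

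I do not expect a genuine obstacle here, since the lemma is essentially bookkeeping that ties together the matroid rank, the capped valuation, and the trivial upper bound $\APS_i \le v_i([m])$. The step to treat carefully is the very first one: one must explicitly appeal to Lemma \ref{lem:hat-v-mrf} to certify that $\widehat{v}_i$ really is the rank function of $\M_{\widehat{v}}$, so that $r(\M_{\widehat{v}})$ may legitimately be read off as $\widehat{v}_i([m])$ rather than merely bounded by it.
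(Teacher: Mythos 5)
Your proof is correct, but it takes a genuinely different route from the paper's. The paper proves the two inequalities separately: for the upper bound it observes that capping forces every independent set of $\M_{\widehat{v}}$ to have size at most $\APS_i$; for the lower bound it invokes Claim \ref{clm:aps-hat-v} (that $\overline{\APS}_i = \APS_i$) together with the set-based Definition \ref{def:aps-sets} to extract sets of $\widehat{v}_i$-value at least $\APS_i$, whence the rank is at least $\APS_i$. You instead collapse the whole question into the single chain $r(\M_{\widehat{v}}) = \widehat{v}_i([m]) = \min\{v_i([m]), \APS_i\}$, and then dispatch the only remaining point, $\APS_i \le v_i([m])$, by applying monotonicity inside the price-based definition \eqref{def:aps-price}: for every price vector $p \in \Price$ the inner maximum is bounded by $v_i([m])$, so the outer minimum is too. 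Your argument is more elementary and self-contained: it bypasses Claim \ref{clm:aps-hat-v} and the set-based program entirely, needing only Lemma \ref{lem:hat-v-mrf} (to certify that $\widehat{v}_i$ is the rank function of $\M_{\widehat{v}}$, a step you rightly flag as the one requiring explicit justification) plus a trivial upper bound on $\APS_i$. What the paper's route buys is economy within the larger proof of Theorem \ref{thm:matroid-rank}: Claim \ref{clm:aps-hat-v} and the feasible solution of the set-based program are needed again, in an essential way, in Lemma \ref{lem:eq-matroid-rank-in-terms-of-aps}, so the paper is merely reusing machinery it must build anyway, whereas your proof stands alone.
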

\begin{proof}
As $\widehat{v}_i$ caps the valuations at $\APS_i,$ no independent set can have size more than $\APS_i,$ hence the rank of $\M_{\widehat{v}}$ is at most $\APS_i$. At the same time, from Claim \ref{clm:aps-hat-v}, $\overline{\APS}_i = \APS_i.$ The set based Definition \ref{def:aps-sets} of $\APS$ then implies that there exist some sets of value at least $\APS_i$ under $\widehat{v}_i$. This implies that the rank $r_{\M_{\widehat{v}}}([m]) \geq \APS_i.$ Together, we get $r_{\M_{\widehat{v}}}([m]) = \APS_i$.
\end{proof}

\noindent
We now prove the key lemma towards proving Theorem \ref{thm:matroid-rank}.

\begin{lemma}\label{lem:eq-matroid-rank-in-terms-of-aps}
For any subset $T \subseteq [m]$,
\begin{equation}\label{eqn:elements-outside-T-1}
    |[m] \setminus T| \geq n \cdot [\APS_i - r_{\M_{\widehat{v}}}(T)]
\end{equation}
\end{lemma}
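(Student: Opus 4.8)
The plan is to avoid reasoning about the union matroid directly and instead exploit the fractional, set-based certificate for $\APS$ applied to the \emph{capped} function $\widehat{v}_i$. By Claim \ref{clm:aps-hat-v} we have $\overline{\APS}_i=\APS_i$, so invoking Definition \ref{def:aps-sets} for $\widehat{v}_i$ with $z=\APS_i$ produces a collection of sets $\Set$ with nonnegative weights $\{\lambda_S\}_{S\in\Set}$ such that: (i) every $S\in\Set$ with $\lambda_S>0$ satisfies $\widehat{v}_i(S)\ge \APS_i$; (ii) $\sum_{S}\lambda_S=1$; and (iii) $\sum_{S:\,j\in S}\lambda_S\le \tfrac1n$ for every good $j\in[m]$. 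Writing $\bar r$ for the rank function $r_{\M_{\widehat{v}}}$, recall that $\widehat{v}_i(S)=\bar r(S)$ for every $S$, so each support set obeys $\bar r(S)\ge \APS_i$. This is the object I would build everything on.

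Next I would establish a per-set lower bound on $|S\setminus T|$. Fix any $T\subseteq[m]$ and any support set $S$. Since $\bar r$ is a matroid rank function it is monotone and submodular, and its value never exceeds cardinality, so
\[ \APS_i \;\le\; \bar r(S) \;\le\; \bar r(S\cap T)+\bar r(S\setminus T)\;\le\; \bar r(T)+|S\setminus T|. \]
Rearranging gives $|S\setminus T|\ge \APS_i-\bar r(T)$ for every set $S$ in the support (the bound being vacuous, hence still valid, when the right-hand side is negative).

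The final step is an averaging/double-counting argument. Multiplying the per-set bound by $\lambda_S$ and summing, constraint (ii) yields $\sum_{S}\lambda_S\,|S\setminus T|\ge \APS_i-\bar r(T)$. On the other hand, interchanging the order of summation and invoking the budget constraint (iii) gives
\[ \sum_{S}\lambda_S\,|S\setminus T|=\sum_{j\in[m]\setminus T}\;\sum_{S:\,j\in S}\lambda_S\;\le\;\sum_{j\in[m]\setminus T}\frac1n=\frac{|[m]\setminus T|}{n}. \]
Chaining the two displays yields $\tfrac1n|[m]\setminus T|\ge \APS_i-\bar r(T)$, which is exactly \eqref{eqn:elements-outside-T-1}.

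I expect the only genuinely delicate point to be lining up the certificate correctly: one must work with the capped function $\widehat{v}_i$, so that $\bar r(S)=\widehat{v}_i(S)$ and $\bar r([m])=\APS_i$ by Lemma \ref{lem:rank-equal-to-aps}, and one must use $\overline{\APS}_i=\APS_i$ to guarantee a certificate whose support sets all have rank at least $\APS_i$. Once these objects are in place, the submodular decomposition $\bar r(S)\le \bar r(S\cap T)+\bar r(S\setminus T)$ and the budget-induced double count do all the work, and no appeal to the union-matroid rank formula is required here, since \eqref{eqn:elements-outside-T-1} is precisely the hypothesis \eqref{eq:matroid-union-rank-n} that subsequently feeds Lemmas \ref{lem:matroid-union-rank-n} and \ref{lem:matroid-disjoint-bases}.
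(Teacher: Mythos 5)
Your proof is correct, and while it starts and ends the same way as the paper's---extract the fractional certificate for the capped function $\widehat{v}_i$ via Claim \ref{clm:aps-hat-v} and Definition \ref{def:aps-sets}, then play the per-good weight budget of $\tfrac{1}{n}$ against the total weight of $1$ in a double count---the middle step is genuinely different. The paper derives its per-set bound combinatorially: each support set $S_j$ satisfies $\widehat{v}_i(S_j)=\APS_i$ (the cap forces equality), so it spans $\M_{\widehat{v}}$, and the matroid exchange axiom is invoked to pick exactly $\APS_i-r_{\M_{\widehat{v}}}(T)$ elements of $S_j$ that augment $T$ with unit marginals; these elements lie outside $T$, and the double count is carried out over the union of these constructed sets $S'_j$. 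You instead obtain $|S\setminus T|\ge \APS_i-r_{\M_{\widehat{v}}}(T)$ directly from the chain $r(S)\le r(S\cap T)+r(S\setminus T)\le r(T)+|S\setminus T|$ (submodularity, then monotonicity plus rank-at-most-cardinality) and average over the support. Your route is more elementary and somewhat more robust: it sidesteps the mildly delicate justification---glossed over in the paper---that the exchange elements can indeed be chosen outside $T$ with marginal one each, and it uses nothing beyond monotonicity, submodularity, and unit-bounded marginals, so it would apply verbatim to any function with those properties; you also do not need Lemma \ref{lem:rank-equal-to-aps} for this step, only $r_{\M_{\widehat{v}}}(S)\ge\APS_i$ on the support. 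What the paper's version buys in exchange is an explicit combinatorial witness: it exhibits the actual elements outside $T$ being counted, which keeps the argument aligned with the bases-and-exchange machinery used in the rest of Section \ref{sec:mrf}. Both proofs close with the same weight double count, and both are valid.
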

\begin{proof}
From Claim \ref{clm:aps-hat-v}, the $\APS$ value under $\widehat{v}_i(\cdot)$ is $\APS_i$. Definition \ref{def:aps-sets} of $\APS$ shows that there exists an optimal feasible solution to the program. Let the sets and their associated weights in the solution be $\Set=\{S_1, \ldots, S_k\}$ and  $\Lambda=\{\lambda_1, \ldots, \lambda_k\}$ respectively. We have $\widehat{v}_i(S_j) \geq \APS_i$ for all $j\in [k],$ and the total weight on any particular good $g$, that is, $\sum_{j:g\in S_j}\lambda_j,$ is at most $\frac{1}{n}$. As $r_{\M_{\widehat{v}}}(S_j) = \APS_i,$ as $\widehat{v}_i(S_j) = \APS_i$ for all $j \in [k]$. That is, the sets $S_j$ are bases of the matroid $\M_{\widehat{v}}.$

Consider any set $T \subseteq [m]$. Let the rank of $T$ be $t_r$. As $r(\M_{\widehat{v}})=\APS_i$ from Lemma \ref{lem:rank-equal-to-aps}, $t_r \leq \APS_i$.  Therefore, using Property $2$ of Matroid definition \ref{def:matroid}, we can move $\APS_i - t_r$ elements from each $S_j$ to $T$. Let $S'_j$ be any set of $\APS_i - t_r$ elements that can be added to $T$, with a marginal utility of one for each element. 

The total weight of all the distinct elements in the sets $S'_j$ can be expressed in two ways as,
$$\sum_{g\in \cup_j S'_j}\sum_{j:g\in S'_j}w(g,S'_j)=\sum_{j\in [k]}\sum_{g\in S'_j}w(S'_j),$$ 
where $w(S'_j)$ is the weight of the set $S'_j,$ and $w(g,S'_j)$ is the weight on good $g$ due to it belonging in $S'_j,$ in the solution $(\Set,\Lambda)$ to the program defining $\APS$. 

As the total weight on each good is at most $1/n,$ the left expression can be evaluated as,
$$\sum_{g\in \cup_j S'_j}\sum_{j:g\in S'_j}w(g,S'_j)\le \sum_{g\in \cup_j S'_j}\frac{1}{n}=|\cup_{g\in [k]}S'_j|\cdot \frac{1}{n}.$$

As $w(S'_j)$ is equal to the weight of the set $S_j,$ the right expression can be evaluated as,
$$\sum_{j\in [k]}\sum_{g\in S'_j}w(S'_j) =\sum_{j\in [k]}|S'_j|\lambda_j.$$

Equating the two, we get,
\begin{align*}
    \sum_{j \in [r]} \lambda_j |S'_j| \leq \frac{1}{n} \cdot |\cup_{j \in [r]} S'_j|.
\end{align*}
As $|S'_j| = \APS_i - t_r,$ and $\sum_{j \in [r]} \lambda_j = 1$,
\begin{align}\label{eqn:elements-outside-T-2}
    |\cup_{j \in [r]} S'_j| \geq n \cdot (\APS_i - t_r)
\end{align}
Finally, as all the sets $S'_j$ $j\in [k],$ add elements to $T$ that are not already present in $T$, we have that $\cup_{j \in [k]}S'_j \subseteq [m] \setminus T$. Thus, $|\cup_{j \in [k]} S'_j| \geq |[m] \setminus T|$. Substituting in equation \eqref{eqn:elements-outside-T-2},
\begin{align*}
    |[m] \setminus T| \geq n \cdot (\APS_i - t_r) = n \cdot [\APS_i-r_{\M_{\widehat{v}}}(T)]. &&\qedhere 
\end{align*}
\end{proof}

\noindent
Theorem \ref{thm:matroid-rank} follows as a combination of all the lemmas.

\begin{proof}[Proof of Theorem \ref{thm:matroid-rank}] First, by substituting Lemma \ref{lem:rank-equal-to-aps} in equation \eqref{lem:eq-matroid-rank-in-terms-of-aps}, and combining with Lemma \ref{lem:matroid-union-rank-n} we immediately get the following relation.
\begin{equation}\label{lem:union-matroid-rank-final}
r(\M_{\widehat{v}}^{n}) = n \cdot r(\M_{\widehat{v}}).
\end{equation}

Combining this with Lemma \ref{lem:matroid-disjoint-bases}, we get that $\M_{\widehat{v}}$ has at least $n$ disjoint bases. This means we can create a partition of $[m]$ where each part has value $r(\M_{\widehat{v}_i}),$ which is equal to $\APS_i$ from Lemma \ref{lem:rank-equal-to-aps}. Thus, $\overline{\MMS}_i \geq \APS_i$. Along with Lemma \ref{clm:mmm-at least-mms-hat} we then have, $\MMS_i \geq \APS_i$. Finally, we know from Claim \ref{clm:aps-dominates-mms} $\APS_i \geq \MMS_i.$ Therefore $\APS_i = \MMS_i$.
\end{proof}
Finally, we prove as a corollary of Theorem \ref{thm:matroid-rank}, the following computational result.

\begin{theorem}\label{thm:mrfs-computation}
Given a fair allocation instance $\Ins$ where every agent's valuation function is an \MRF, there is an efficient (polynomial time) algorithm to compute an \MSW allocation where every agent $i$ receives a bundle of value at least $\APS_i.$ 
\end{theorem}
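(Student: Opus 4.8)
The plan is to reduce the statement to the already-established computation of exact \MMS allocations for \MRFs. By Theorem \ref{thm:matroid-rank}, every agent $i$ with an \MRF valuation satisfies $\APS_i = \MMS_i$. Consequently, an allocation that guarantees each agent $i$ a bundle of value at least $\APS_i$ is exactly an allocation that guarantees each agent at least $\MMS_i$, i.e., an (exact) \MMS allocation. It therefore suffices to exhibit a polynomial-time algorithm that, for \MRF valuations, outputs an \MMS allocation which simultaneously maximizes social welfare; the identity $\APS_i = \MMS_i$ then upgrades the \MMS guarantee to the desired \APS guarantee at no extra cost.

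First I would invoke the known structural result for \MRF valuations that an allocation simultaneously maximizing the utilitarian social welfare and meeting the per-agent \MMS bound exists and is computable in polynomial time \cite{BarmanV21, benabbou2021finding}. Concretely, the plan is: (i) compute a Lorenz-dominating allocation $\mathcal{A}$ for the instance, which for \MRFs is known to be poly-time computable via matroid-union and matching-based clean-up procedures; (ii) observe that because a Lorenz-dominating allocation majorizes the sorted utility profile of every other allocation, its total utility --- the full prefix sum --- is maximal, so $\mathcal{A}$ is an \MSW allocation; and (iii) argue that $\mathcal{A}$ also delivers each agent at least their \MMS value. Combining (iii) with Theorem \ref{thm:matroid-rank} yields $v_i(A_i) \geq \MMS_i = \APS_i$ for all $i$, while (ii) supplies the efficiency guarantee, completing the proof.

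The main obstacle is step (iii): establishing that a \emph{single} allocation can meet the per-agent \MMS (equivalently \APS) targets while also being \MSW. Lorenz dominance is a statement about the sorted, anonymized utility vector, so it immediately yields \MSW and egalitarian optimality but does not by itself certify that a \emph{named} agent $i$ receives at least $\MMS_i$ when agents have distinct matroids. I would resolve this by citing the \MRF-specific guarantee that Lorenz-dominating (equivalently, the non-wasteful, \MMS-fair) allocations meet every agent's \MMS bound \cite{BarmanV21, barman2020existence}. Should a more self-contained argument be preferred, one can instead reuse the disjoint-bases construction from the proof of Theorem \ref{thm:matroid-rank}: for each agent's own capped matroid it produces an $n$-partition witnessing value $\APS_i$ in each part, and one then argues that the welfare-maximizing allocation can be chosen consistently with these per-agent witnesses. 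Either way, the crux is decoupling the social-welfare objective from the individual fairness targets and showing they are jointly attainable, which is precisely where the special structure of matroid rank functions is used.
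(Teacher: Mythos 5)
Your proof is correct and takes essentially the same route as the paper: invoke the result of \cite{BarmanV21} that, for \MRF valuations, an \MSW allocation giving every agent at least $\MMS_i$ exists and is computable in polynomial time, then apply Theorem \ref{thm:matroid-rank} ($\APS_i = \MMS_i$) to upgrade the per-agent guarantee to $\APS_i$. Your additional discussion of Lorenz-dominating allocations merely unpacks the internals of the cited result rather than constituting a different argument.
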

\begin{proof}
We know from \cite{BarmanV21} that an \MSW allocation that gives every agent a bundle of value at least $\MMS_i$ exists and can be computed efficiently. Theorem \ref{thm:matroid-rank} implies the same allocation guarantees an $\APS$ or higher valued bundle to each agent.
\end{proof}

\section{\XOS Valuations with Binary Marginals}\label{sec:xos}
In this section, we consider $\APS$ approximations when agents have $\XOS$ valuations with binary marginals. Theorems \ref{thm:xos-main}, \ref{thm:xos-aps-lower-bound} and \ref{thm:xos-algo} are the main results of this section.

\begin{theorem}\label{thm:xos-main} If the valuation function of an agent $i$ in an instance $\Ins$ is an $\XOS$ function with binary marginals, then their $\APS$ and $\MMS$ values satisfy, $\APS_i \leq 2 \cdot \MMS_i + 1.$
\end{theorem}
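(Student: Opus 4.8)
The plan is to prove the equivalent statement $\MMS_i \ge \lfloor \APS_i/2\rfloor$, since this gives $2\MMS_i \ge \APS_i - 1$. I would work from the set-based Definition \ref{def:aps-sets} of $\APS$: fixing $z = \APS_i$, there is a family of sets $S_1,\dots,S_k$ with $v_i(S_j) \ge z$ and weights $\lambda_1,\dots,\lambda_k \ge 0$ satisfying $\sum_j \lambda_j = 1$ and $\sum_{j : g \in S_j}\lambda_j \le \tfrac1n$ for every good $g$. The goal is to round this fractional certificate into an honest partition of $[m]$ into $n$ bundles, each of value at least $\lfloor z/2\rfloor$, which by definition witnesses $\MMS_i \ge \lfloor z/2 \rfloor$. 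Combined with Claim \ref{clm:aps-dominates-mms} this pins down the stated two-sided relation.

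Before rounding, I would exploit the binary-$\XOS$ structure to make the certifying sets \emph{rigid}. For each $S_j$, pick an additive function $f_j \in \mathcal{F}$ attaining $v_i(S_j)$, and let $C_j$ be its support; then $T_j := S_j \cap C_j$ satisfies $|T_j| = v_i(S_j) \ge z$, and crucially every $R \subseteq T_j$ has $v_i(R) = |R|$ (value equals cardinality on $T_j$, since $R \subseteq C_j$ forces $v_i(R) \ge |R|$ while binary marginals force $v_i(R) \le |R|$). Replacing each $S_j$ by $T_j$ only shrinks the coverage sums, so $(T_j,\lambda_j)_j$ is still a valid $\APS$ certificate; the payoff is that now I may carve value out of any $T_j$ one unit per fresh good taken.

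The heart of the proof is a greedy extraction. Let $c = \lfloor z/2\rfloor$ and maintain a set $U$ of already-used goods, initially empty; I build bundles $B_1,\dots,B_n$ in turn, each consisting of exactly $c$ fresh goods drawn from a single $T_j$ (hence of value exactly $c$). The one thing to verify is that a suitable $T_j$ always exists while $|U| \le (n-1)c$. Here I use the covering cap: since $\sum_{g \in U}\sum_{j : g \in T_j}\lambda_j \le |U|/n$, we get $\sum_j \lambda_j\,|T_j \setminus U| \ge z - |U|/n \ge z - (n-1)c/n$, so by averaging some $T_{j^\star}$ has $|T_{j^\star}\setminus U| \ge z - (n-1)c/n \ge c$, the last inequality being exactly where $c \le z/2$ is used. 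Taking $c$ fresh goods from $T_{j^\star}$ yields $B_\ell$ and raises $|U|$ by exactly $c$, so the invariant is preserved through all $n$ rounds. Distributing the leftover goods arbitrarily (monotonicity never hurts) gives an $n$-partition with minimum bundle value $\ge c = \lfloor z/2\rfloor$, completing the argument.

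The step I expect to be the real obstacle is guaranteeing that the greedy never stalls across all $n$ rounds: it is precisely the $\tfrac1n$ covering cap in the $\APS$ program that keeps enough fresh mass of high-value sets available after $(n-1)$ bundles have been removed, and the factor of two together with the additive $+1$ is the unavoidable price of turning one fractional $\tfrac1n$-spread certificate into $n$ genuinely disjoint bundles. A secondary point that must be handled with care is the rigidity reduction of the second paragraph, i.e.\ that binary $\XOS$ valuations admit the ``value $=$ cardinality on clause supports'' structure; this is what lets the greedy measure progress in units of goods rather than in units of (possibly non-extractable) value, and is the place where the argument genuinely uses $\XOS$ with binary marginals rather than an arbitrary subadditive valuation.
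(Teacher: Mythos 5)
Your proof is correct, but it runs in the opposite direction from the paper's, and the two arguments are genuinely different. The paper works with the price-based definition \eqref{def:aps-price}: it first massages an $\MMS$-defining partition into a balanced, non-wasteful one (Algorithm \ref{alg:alloc-mms-price}), in which every bundle has value in $\{\MMS_i,\MMS_i+1\}$ and the leftover set $\R$ has value at most $\MMS_i$, and then prices each bundle at $1/n$ spread uniformly over its goods (leftovers priced $0$); any budget-$1/n$ set then collects at most $\MMS_i+1$ from the priced goods plus at most $\MMS_i$ from $\R$, giving $\APS_i\le 2\MMS_i+1$ by subadditivity. You instead work with the set-based Definition \ref{def:aps-sets}: you take an optimal fractional certificate $(S_j,\lambda_j)_j$, rigidify the sets so that value equals cardinality on them, and greedily round the certificate into $n$ disjoint bundles of value $\lfloor \APS_i/2\rfloor$ each, thereby lower-bounding $\MMS_i$ directly; your averaging step over the $\tfrac1n$ coverage cap is in fact closer in spirit to the counting argument of Lemma \ref{lem:eq-matroid-rank-in-terms-of-aps} in the \MRF{} section than to anything in Section \ref{sec:xos}. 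Both routes are sound. The paper's buys explicit certifying prices and the structural facts about Algorithm \ref{alg:alloc-mms-price} that get reused later (Lemma \ref{lem:check-aps-0}); yours avoids constructing a balanced $\MMS$ partition altogether and makes transparent where the factor $2$ comes from (surviving $n-1$ rounds of extraction under the coverage cap) and where the $+1$ comes from (rounding $z/2$ down to an integer number of goods).

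Two small points to tighten. First, your rigidification as written assumes the additive functions in $\F$ take values in $\{0,1\}$ on their supports: if $f_j$ may take fractional values, then $R\subseteq C_j$ only gives $v_i(R)\ge f_j(R)$, which can be strictly less than $|R|$, and $|T_j|=v_i(S_j)$ can fail. This is the same implicit step as in the paper's own proof of Claim \ref{clm:xos-prop}, and it is fixable: either take $T_j$ to be a \emph{minimal} subset of $S_j$ with $v_i(T_j)=v_i(S_j)$ (minimality plus binary marginals force $f(g)=1$ on all of $T_j$ for any maximizing $f$, so $v_i(T_j)=|T_j|$), or invoke Claim \ref{clm:xos-prop} to get $T_j$ with $v_i(T_j)=|T_j|=v_i(S_j)$; in either case the subset property you need follows from binary marginals alone, since for $R\subseteq T_j$ one has $v_i(R)\ge v_i(T_j)-|T_j\setminus R|=|R|$ and $v_i(R)\le |R|$. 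Second, your final conversion $\MMS_i\ge\lfloor\APS_i/2\rfloor\Rightarrow\APS_i\le 2\MMS_i+1$ silently uses that $\APS_i$ is an integer; this does hold here, because the optimum of the program in Definition \ref{def:aps-sets} is attained at $z=\min\{v_i(T):\lambda_T>0\}$, which is an integer as binary marginals make $v_i$ integer-valued, but it deserves a sentence.
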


\noindent
\textbf{Proof Idea.} Recall the notions of partial and wasteful allocations from Section \ref{sec:prelim}. The crucial step in the proof is Algorithm $\ref{alg:alloc-mms-price},$ which takes as input a fair allocation instance $\Ins,$ and yields a non-wasteful, partial allocation where each of the allocated bundles has value at most $\MMS_i + 1,$ and the set of unallocated goods has a value of at most $\MMS_i$. Using this allocation, we fix prices on the goods such that for agent $i$ the highest value of any affordable bundle of goods, that is one with total price at most $1/n,$ is at most $2\MMS_i+1.$ The price based definition shown in equation \eqref{def:aps-price} of $\APS,$ then implies the theorem. 

Let us now discuss the details of the proof. Hence forth, we call an allocation \textit{balanced} for an agent, if the difference in the values of the smallest and largest bundles according to the agent's valuation function is at most $1.$ 

\begin{algorithm}[tbh!]
\caption{Non-wasteful balanced \MMS allocation}\label{alg:alloc-mms-price}
\DontPrintSemicolon
  \SetKwFunction{Define}{Define}
  \SetKwInOut{Input}{Input}\SetKwInOut{Output}{Output}
  \Input{$([n], [m], v_i(\cdot))$ where $v_i$ is a binary $\XOS$ valuation function}
  \Output{A non-wasteful, balanced \MMS allocation according to $v_i,$ where the leftover goods also have value at most $\MMS_i + 1.$}
  \BlankLine
  Initialize $\A = (A_1, A_2, \ldots, A_n)$ to be any $\MMS$-defining allocation for
  agent $i$\;\label{step:init-mms}
  Initialize $\R \gets \emptyset$\;
  \For{$j \in [n]$}{\label{step:for-start}
  \If{$v_i(A_j) \geq \MMS_i + 1$}{
    Let $G \in \{A \subseteq A_j \vert v_i(A) = |A| = \MMS_i$\}\;\label{step:select-G}
    Set $A_j \gets G$\;
    $\R \gets \R \cup \{A_j \setminus G\}$\;
    }}\label{step:for-end}
  $\R \gets [m] \setminus \cup_{j \in [n]}A_j$\;
  \While{$v_i(\R) \geq \MMS_i + 1$}{\label{step:while-start}
    Let $A' \subseteq \R$ with $v_i(A') = |A'| = \MMS_i + 1$\; \label{step:select-R}
    Let $k \gets \arg\min_{j \in [n]} v_i(A_j)$\;
    $\R \gets \R \cup A_k$\;
    $\A_k \gets A'$\;
  }\label{step:while-end}
\Return $\A$
\end{algorithm}

\noindent
\textbf{Algorithm.} The algorithm starts by computing any $\MMS$-defining allocation $\A$ for agent $i,$ and performs two phases. First, while any bundle $A_j$'s value in $\A$ is more than $\MMS_i+1,$ it considers any subset $G$ of $A_j$ that has both size and value exactly $\MMS_i,$ leaves $G$ with agent $j$ and removes the remaining goods. All goods removed in this way are added to a bundle called $\R.$ In the second phase, while the value of $\R$ is higher than $\MMS_1+1,$ it considers any subset $A'$ of $\R$ of both size and value exactly $\MMS_i+1.$ The algorithm takes away the bundle of the smallest valued agent, adds this to $\R,$ and gives $A'$ to this agent instead. 

We use two results that will be useful in establishing that this algorithm converges in the special kind of allocation desired. Claim \ref{clm:xos-prop} is a technical property of $\XOS$ valuations, followed by Lemma \ref{lem:xos-partial-alloc-bound} which shows a key property of Algorithm \ref{alg:alloc-mms-price}.
\begin{restatable}{claim}{ClaimNonWastefulXOS}
\label{clm:xos-prop}
Given an $\XOS$ valuation function  $v: 2^{[m]} \rightarrow \mathbb{R}_{\ge 0}$ with binary marginals, and any set $S \subseteq [m]$, we can find a subset $S' \subseteq S$ such that $v(S') = v(S) = |S'|$.
\end{restatable}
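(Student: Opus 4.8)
The plan is to construct $S'$ by greedily pruning $S$, and then to certify the cardinality condition $v(S') = |S'|$ using the max-of-additives representation of $v$. I would begin from $S' = S$ and repeatedly discard any good whose marginal is zero over the rest: while there is a $g \in S'$ with $v(g \mid S' \setminus \{g\}) = 0$, replace $S'$ by $S' \setminus \{g\}$. Because the marginals are binary, a zero marginal means $v(S' \setminus \{g\}) = v(S')$, so each deletion preserves the value; an easy induction then gives $v(S') = v(S)$ at every stage. Since each step strictly shrinks $S'$, the process halts, and at termination every remaining good satisfies $v(g \mid S' \setminus \{g\}) \neq 0$, hence $v(g \mid S' \setminus \{g\}) = 1$ by binary marginals. (If $v(S) = 0$ this empties $S'$, and the claim holds trivially with $S' = \emptyset$.)

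It remains to show $v(S') = |S'|$, and this is the step I expect to be the crux: since \XOS functions need not be submodular, the condition ``every good has marginal $1$ over the rest'' does not by itself force the value to equal the cardinality, so the leverage must come entirely from the \XOS representation. Let $f \in \F$ be an additive clause attaining $v(S') = f(S')$, so that $v(S') = \sum_{g \in S'} f(\{g\})$ by additivity. For the upper bound I would note $f(\{g\}) \le \max_{f' \in \F} f'(\{g\}) = v(\{g\}) \le 1$, where the last inequality is the binary-marginal condition on the singleton. For the lower bound, fix $g \in S'$ and use $v(S' \setminus \{g\}) \ge f(S' \setminus \{g\}) = f(S') - f(\{g\}) = v(S') - f(\{g\})$, which rearranges to $f(\{g\}) \ge v(S') - v(S' \setminus \{g\}) = v(g \mid S' \setminus \{g\}) = 1$. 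Hence $f(\{g\}) = 1$ for every $g \in S'$.

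Summing over $g \in S'$ then gives $v(S') = f(S') = \sum_{g \in S'} f(\{g\}) = |S'|$, and combining this with $v(S') = v(S)$ from the pruning phase yields $v(S') = v(S) = |S'|$, as required. The only genuine obstacle is recognizing that the maximizing additive clause $f$ must equal $1$ on each retained singleton; once the two-sided bounds $f(\{g\}) \le v(\{g\}) \le 1$ and $f(\{g\}) \ge v(g \mid S' \setminus \{g\})$ are both in hand, the conclusion is immediate, and no further case analysis on the (possibly exponential) family $\F$ is needed.
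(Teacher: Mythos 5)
Your proof is correct, but it takes a genuinely different route from the paper's. The paper's argument is shorter and has no pruning phase: it picks an additive clause $f\in\F$ attaining $v(S)=f(S)$, asserts that binary marginals force $f(g)\in\{0,1\}$ for every good, defines $S'=\{g\in S \mid f(g)=1\}$, and concludes $v(S)=f(S)=f(S')=|S'|$, with $v(S')=|S'|$ then following from $v(S')\ge f(S')$ and the cap $v(S')\le|S'|$ imposed by binary marginals. You instead first prune $S$ down to a set on which every good has marginal exactly $1$ over the rest, and only then invoke the \XOS representation, pinning the maximizing clause by the two-sided bound $1=v(g\mid S'\setminus\{g\})\le f(\{g\})\le v(\{g\})\le 1$. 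What your extra phase buys is robustness: the paper's assertion that $f(g)\in\{0,1\}$ is not automatic for an arbitrary \XOS representation of a binary-marginal function. For instance, on two goods the family $\F=\bigl\{(1,0),\,(0,1),\,(\tfrac{1}{2},\tfrac{1}{2})\bigr\}$ represents a function with binary marginals (all values of $v$ are $0$ or $1$), yet $(\tfrac{1}{2},\tfrac{1}{2})$ is a value-attaining clause for the full set with fractional singleton values, so the paper's recipe applied to that clause would return $S'=\emptyset$ while $v(S)=1$; the paper is implicitly normalizing the representation, or implicitly choosing a clause with $\{0,1\}$ singleton values (such a choice always exists, but this is exactly the point that needs justification). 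Your lower bound $f(\{g\})\ge v(g\mid S'\setminus\{g\})$ extracts what is needed from the function itself rather than from the representation, so no normalization of $\F$ is required --- at the modest cost of a termination/induction argument for the pruning loop that the paper's one-shot construction avoids.
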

\begin{proof}
Since $v$ is an $\XOS$ function, there is a family $\F$ of additive functions such that for all $\Set \subseteq [m]$, there is an additive function $f \in \F$ with $v(\Set) = f(\Set)$. Given the set $S$ consider any such function $f$. As $f$ is additive, $f(S)=\sum_{g\in S}f(g).$ Further, as $v$ has binary marginals, $f(g)\in \{0,1\}$ for all $g.$ We define the set $S' \coloneqq \{g \in S \vert f(g) = 1\}$. Thus we have, $v(S) = f(S) = f(S') = |S'|$. 

Finally, as $f(S')=|S'|,$ $v(S')\ge |S'|,$ but as $v$ has binary marginals, $v(S') \leq |S'|$, thus $v(S')=|S'|.$
\end{proof}
\begin{remark}
    We note here that while Claim \ref{clm:xos-prop} seems obvious, it is not true for binary subadditive valuations which are the immediate generalisation of binary $\XOS$ valuations. To see this, consider a function on $m = 3$ goods where the entire set of goods is valued at $2$ and any strict subset of the three goods has a value $1$. One can verify that this function is subadditive and that Claim \ref{clm:xos-prop} does not hold for this function.
\end{remark}
We next prove the following lemma.
\begin{restatable}{lemma}{LemmaAlgorithmProp}\label{lem:xos-partial-alloc-bound}
Algorithm $\ref{alg:alloc-mms-price}$ terminates, and the output allocation $\A$ is a non-wasteful, balanced, \MMS allocation according to $v_i$. Furthermore, $v_i(\R) \leq \MMS_i$.
\end{restatable}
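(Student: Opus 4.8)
The plan is to extract one structural fact from Claim~\ref{clm:xos-prop} and then track a simple potential through the two phases of Algorithm~\ref{alg:alloc-mms-price}. First I would record the observation that if a set $S$ satisfies $v_i(S)=|S|$, then \emph{every} subset $S''\subseteq S$ satisfies $v_i(S'')=|S''|$: taking the additive certificate $f\in\F$ from Claim~\ref{clm:xos-prop} with $f(S)=v_i(S)=|S|$ forces $f(g)=1$ for each $g\in S$, so $f(S'')=|S''|$, whence $v_i(S'')\ge f(S'')=|S''|$ while binary marginals give $v_i(S'')\le|S''|$. This single fact does three jobs at once: it guarantees that the sets $G$ and $A'$ selected in Steps~\ref{step:select-G} and~\ref{step:select-R} exist (apply Claim~\ref{clm:xos-prop} to $A_j$ or $\R$ and then pass to a subset of the required cardinality), and it makes every bundle of the form $G$ or $A'$ non-wasteful, since for each good $g$ in such a bundle $B$ we have $v_i(B)-v_i(B\setminus\{g\})=|B|-|B\setminus\{g\}|=1>0$.

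Next I would pin down the invariant on bundle values. After the first loop (Steps~\ref{step:for-start}--\ref{step:for-end}), every bundle has value exactly $\MMS_i$: bundles already at value $\MMS_i$ in the initial \MMS-defining allocation are untouched, and any bundle of value $\ge\MMS_i+1$ is replaced by a subset $G$ with $v_i(G)=\MMS_i$. I then claim, by induction over the second loop, that every bundle always has value in $\{\MMS_i,\MMS_i+1\}$: the base case is the all-$\MMS_i$ configuration just described, and each iteration only overwrites one bundle with $A'$, which has value exactly $\MMS_i+1$, leaving the others fixed. This invariant immediately yields two of the target properties of the output: the allocation is \emph{balanced} (largest minus smallest bundle value is at most one) and is an \MMS allocation (every bundle has value at least $\MMS_i$); non-wastefulness is already covered above.

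The crux, and the step I expect to be the main obstacle, is \emph{termination} of the second loop. I would use the potential $\Phi=\sum_{j\in[n]}v_i(A_j)$. By the invariant, $\Phi\le n(\MMS_i+1)$ at all times, and $\Phi=n\cdot\MMS_i$ when the loop begins. The key sub-claim is that whenever the loop guard $v_i(\R)\ge\MMS_i+1$ holds, the minimum-value bundle $A_k$ has value exactly $\MMS_i$, so replacing it by $A'$ raises $\Phi$ by exactly $1$. I would prove this sub-claim by contradiction: if instead \emph{every} bundle had value $\MMS_i+1$ while $v_i(\R)\ge\MMS_i+1$, then distributing all of $\R$ into the bundles (say, appending it to $A_1$) produces a genuine partition of $[m]$ into $n$ bundles, each of value at least $\MMS_i+1$ by monotonicity of $v_i$; this forces $\MMS_i\ge\MMS_i+1$, a contradiction. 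Hence each iteration strictly increases $\Phi$ by one, and since $\Phi\le n(\MMS_i+1)$, the loop runs at most $n$ times.

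Finally, the extra guarantee $v_i(\R)\le\MMS_i$ is exactly the negation of the loop guard at termination, using that $v_i$ is integer-valued so $v_i(\R)<\MMS_i+1$ means $v_i(\R)\le\MMS_i$. Collecting the four established properties, namely non-wasteful, balanced, \MMS, and $v_i(\R)\le\MMS_i$, completes the proof.
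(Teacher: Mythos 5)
Your proposal is correct and takes essentially the same route as the paper: Claim~\ref{clm:xos-prop} supplies the sets $G$ and $A'$ (your subset-closure observation makes explicit why sets of the exact required cardinality exist), the value invariant $v_i(A_j)\in\{\MMS_i,\MMS_i+1\}$ gives balancedness and the $\MMS$ guarantee, and your potential $\Phi$ together with the ``all bundles at $\MMS_i+1$ plus leftover $\R$ would force $\MMS_i\ge\MMS_i+1$'' contradiction is just a repackaging of the paper's argument that the While loop runs at most $n-1$ times, after which the falsified guard and integrality give $v_i(\R)\le\MMS_i$. The one loose end in your write-up --- non-wastefulness is argued only for bundles of the form $G$ or $A'$, whereas bundles whose initial value is exactly $\MMS_i$ are never rebuilt by the For loop and so could in principle contain zero-marginal goods --- is a gap the paper's own proof shares (it simply asserts non-wastefulness from the value bounds), and it is easily closed by choosing the initial $\MMS$-defining allocation non-wasteful or by applying Claim~\ref{clm:xos-prop} to every bundle.
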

\begin{proof}
Consider the first For loop (Steps \ref{step:for-start} to \ref{step:for-end}). This loop accesses every bundle $A_j$ at most once, and if its value is more than $\MMS_i + 1$, finds a subset $G\subseteq A_j$ of value $\MMS_i$. From Claim \ref{clm:xos-prop}, such a set $G$ always exists. Since the algorithm is not required to execute in polynomial time, let us assume that the algorithm finds the set by enumeration. Thus, at the end of this loop, we have $v_i(A_j) = \MMS_i$ for each $j \in [n]$.

Let us now look at the While loop (Steps \ref{step:while-start} to \ref{step:while-end}). If this loop has $n$ or more iterations, then after $n$ iterations it replaces more than $n$ bundles, and finds a new (partial) allocation where every bundle has value equal to $\MMS_i+1.$ From the definition of \MMS, this means $\MMS_i>\MMS_i+1,$ a contradiction. Therefore, this loop is executed at most $n-1$ times and in the end, we have $v_i(A_j) \leq \MMS_i + 1$ for all $j$. After the loop terminates, the condition for staying in the loop is falsified, hence $v_i(\R) \leq \MMS_i$.

Therefore, in at most $O(n)$ iterations of the For and While loops, the algorithm terminates, and yields an allocation where $\MMS_i \leq v_i(A_j) \leq \MMS_i + 1$ for all $j \in [n],$ and $v_i(\R) \leq \MMS_i,$ that is, a non-wasteful, balanced and \MMS allocation.
\end{proof}

\begin{proof}[Proof of Theorem \ref{thm:xos-main}]
Consider the partial allocation $\A$ and the set of remaining goods $\R$ obtained at the end of Algorithm \ref{alg:alloc-mms-price}. 
We define a price vector $\vecp=(p_j)_{j\in[m]}$ for the goods based on $\A$ as follows. 
\begin{align*}
    & p_j = \left\{ \begin{array}{ll}
         \frac{1}{n\cdot |A_k|},& \forall j\in A_k, \forall A_k\in \A \\
             0,&\quad \forall j\in \R.
    \end{array}\right.
\end{align*}
Let us see the maximum value that an agent with a budget $1/n$ can afford with this price vector. First, they can get all of $\R$ for free. From Lemma \ref{lem:xos-partial-alloc-bound}, $\R$ has value at most $\MMS_i$ for $i.$ Further, each bundle $A_k$ has value under $v_i$ at most $\MMS_i + 1.$
Therefore, each good in $\A$ has price at least $1/(n\cdot (\MMS_i+1)).$ Even if agent $i$ picks all the lowest priced goods and gets a marginal increment of one for each of them, at a budget of $\frac{1}{n}$, they can receive a value of at most $\MMS_i + 1$ from $\A$. By subadditivity of $v_i$, their total value from $\A$ and $\R$ together is at most $2 \cdot \MMS_i + 1$ value. From the price based definition of $\APS,$ $\APS_i\leq 2\cdot \MMS_i+1.$
\end{proof}

Next, we prove that Theorem \ref{thm:xos-main} is almost the tightest relation between $\MMS$ and $\APS$ values for this setting.

\begin{restatable}{theorem}{ThmXOSLowerBound}
\label{thm:xos-aps-lower-bound}
There exists a fair allocation instance with three agents and six goods, where all the agents have an identical \XOS valuation function with binary marginals, and their $\APS$ and $\MMS$ values satisfy $\APS\geq 2\cdot \MMS.$ 
\end{restatable}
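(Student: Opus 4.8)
The plan is to exhibit an explicit instance rather than argue abstractly. Take the six goods to be the vertices of two vertex-disjoint triangles, one on $\{1,2,3\}$ and one on $\{4,5,6\}$, and let $E$ be the resulting six edges. Define the common valuation by $v(S)=\max_{e\in E}|S\cap e|$, the largest number of endpoints of a single edge contained in $S$. First I would observe that $v$ is \XOS with binary marginals: each edge $e$ induces an additive $0/1$ function $f_e(S)=|S\cap e|$, and $v=\max_{e}f_e$ is \XOS by definition; moreover a maximum of additive $0/1$ functions always has marginals in $\{0,1\}$ (adding one good increases each $f_e$ by at most one, hence increases their maximum by at most one), so $v$ has binary marginals. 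Concretely, $v(S)=2$ iff $S$ contains both endpoints of some edge, $v(S)=1$ iff $S$ is a nonempty independent set (this uses that the graph has no isolated vertex), and $v(\emptyset)=0$; in particular $v\le 2$.

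Next I would pin down the two share values. For $\MMS$ with $n=3$: a partition of the six vertices into three parts each of value $2$ would force every part to contain an edge, hence to have at least two vertices, hence all three parts to have exactly two vertices, i.e.\ a perfect matching of the graph. Two disjoint triangles have no perfect matching (each odd triangle leaves one vertex uncovered and there are no cross edges), so no such partition exists and $\MMS\le 1$; conversely the partition $\{1,4\},\{2,5\},\{3,6\}$ gives three nonempty independent sets, each of value $1$, so $\MMS=1$.

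The crux is the lower bound $\APS\ge 2$, which I would establish through the price-based definition~\eqref{def:aps-price}. Since any set of value $2$ must contain an edge and prices are nonnegative, the cheapest way to reach value $2$ is to buy a single edge; thus it suffices to show that for every $p\in\Price$ there is an edge $\{u,v\}$ with $p_u+p_v\le \tfrac13$. Given $p$, write $a=p_1+p_2+p_3$ and $1-a=p_4+p_5+p_6$ for the masses on the two triangles. Inside a triangle each vertex price is counted in exactly two of the three edges, so the three edge-prices sum to twice that triangle's mass; hence the minimum edge-price is at most $\tfrac{2a}{3}$ in the first triangle and at most $\tfrac{2(1-a)}{3}$ in the second. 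Whichever of $a,\,1-a$ is at most $\tfrac12$ then yields an edge of price at most $\tfrac13$, so that edge is affordable at budget $\tfrac13$ and $\max_{S:\,p(S)\le 1/3}v(S)=2$. Taking the minimum over $p$ gives $\APS\ge 2$ (and in fact $\APS=2$ since $v\le 2$). Combining with $\MMS=1$ yields $\APS\ge 2=2\cdot\MMS$, as required.

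I expect the only genuinely non-routine step to be this last averaging argument over the edges of each triangle, which is exactly the statement that the adversarial pricer cannot simultaneously make all six edges cost more than $\tfrac13$; everything else (the \XOS/binary-marginal check and the $\MMS$ computation) is direct verification on the fixed six-good instance.
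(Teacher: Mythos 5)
Your proof is correct, and it uses the same underlying instance as the paper (six goods split into two triples, three agents), but it differs in two details worth noting. First, the valuation: the paper's \XOS family has just two additive functions, the indicators of the triples, so its valuation is $\max\left(|S\cap\{g_1,g_2,g_3\}|,\ |S\cap\{g_4,g_5,g_6\}|\right)$ and can reach $3$; your edge-indicator family $\{f_e\}$ caps the value at $2$. Both work, since the argument only depends on which pairs have value $2$. The more substantive difference is the mechanism for the $\APS$ lower bound. The paper invokes the set-based Definition~\ref{def:aps-sets}: it places weight $1/6$ on each of the six within-triple pairs (exactly your six edges), notes that each good is then covered with total weight $1/3 = 1/n$ and each pair has value $2$, and concludes $\APS \ge 2$ from feasibility of this fractional solution. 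You instead work with the price-based definition~\eqref{def:aps-price} and show by an averaging argument that every price vector leaves some within-triple edge costing at most $1/3$. These are primal and dual views of the same fact -- your observation that the three edge prices of a triangle sum to twice the triangle's mass is precisely what makes the paper's uniform weights feasible. The paper's certificate is slightly more economical (no case split over which triangle is lighter), while your version makes transparent why an adversarial pricer cannot make all six pairs expensive, and it also pins down the exact value $\APS=2$. The $\MMS$ computations are essentially identical: a three-way partition into value-$2$ bundles would require a perfect matching of two disjoint triangles, which does not exist.
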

\begin{proof}
The instance $\Ins$ is as follows. There are three agents and six goods, i.e. $n = 3$ and $m = 6$. Let the goods be denoted by $g_i,\ i\in[6].$ The identical \XOS valuation function of all the agents has two additive functions in the family $\F$, say $f_1$ and $f_2$. 

The first three goods have value $1$ under $f_1$ and the remaining have value $0,$ that is, $f_1(g_i)=1,$ for $i\in [3],$ and $f_1(g_i)=0$ for $i\in [6]\setminus [3].$ Under $f_2$, the opposite is the case, i.e., $f_2(g_i)=0,$ for $i\in [3],$ and $f_2(g_i)=1$ for $i\in [6]\setminus [3].$

As the agents are identical, they have the same $\APS$ and $\MMS$ values. Now, as there are three agents, if $\MMS$ was more than one, each agent must get at least two goods. Note that under $v$, an agent can get a value of two if they receive two goods from $\{g_1, g_2, g_3\}$ or two goods from $\{g_4,g_5,g_6\}$. But to create three bundles of size two each, at least one bundle would have one good from each set $\{g_1,g_2,g_3\}$ and $\{g_4,g_5,g_6\}$. This bundle however would have a value of $1$ under $v$. Thus, the $\MMS$ cannot be more than $1$. 

On the other hand, consider the set of $6$ sets $\{g_i, g_{i+1}\}$ for $i \in [3]$ and $i \in \{4,5,6\}$. Assign a weight of $1/6$ to each of these sets. The total weight assigned is $1.$ Also, each good belongs in exactly two sets, hence the total weight on any single good is $1/3$. Each set has value $2$. Therefore, this is a feasible solution to the Linear program in the $\APS$ definition \ref{def:aps-sets} for $z=2$. Thus, $\APS \geq 2$. Therefore, in this instance, $\APS \geq 2 \cdot \MMS$. 
\end{proof}
Theorems \ref{thm:xos-main} and \ref{thm:xos-aps-lower-bound}, along with the known computational results for $\MMS,$ yield the following results for \APS.

\begin{theorem}\label{thm:xos-algo}
Given a fair allocation instance $\Ins$ where every agent has a binary \XOS valuation function, \begin{enumerate}
    \item A $0.1222$-\APS allocation, meaning one that gives every agent a bundle of value at least $0.1222$ times their $\APS,$ can be computed in polynomial time. 
    \item Even when agents have identical valuations, no better than $0.5$-\APS allocation may exist. 
\end{enumerate} 
\end{theorem}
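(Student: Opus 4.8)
The plan is to prove the two parts separately. Part~1 follows by combining the bound $\APS_i\le 2\MMS_i+1$ of Theorem~\ref{thm:xos-main} with a known polynomial-time algorithm computing approximate \MMS allocations, while Part~2 is a direct consequence of the lower-bound instance of Theorem~\ref{thm:xos-aps-lower-bound}.

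For Part~1, I would invoke the polynomial-time algorithm of \cite{li2018fair} that outputs a $0.3666$-\MMS allocation, that is, one giving every agent $i$ a bundle of value at least $0.3666\cdot\MMS_i$, and argue that the \emph{same} allocation is a $0.1222$-\APS allocation. The guiding identity is $3\cdot 0.1222 = 0.3666$, so it suffices to show $\APS_i\le 3\,\MMS_i$ for every agent. When $\MMS_i\ge 1$ this is immediate from Theorem~\ref{thm:xos-main}, since $\APS_i\le 2\MMS_i+1\le 3\MMS_i$; the bundle value is then at least $0.3666\,\MMS_i = 0.1222\cdot 3\MMS_i \ge 0.1222\,\APS_i$, as required.

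The one case needing a genuine argument, and the main obstacle, is $\MMS_i=0$: here the $0.3666$-\MMS algorithm only guarantees value $0$, so I must rule out $\APS_i>0$ (recall all bundle values are integers). I would establish $\MMS_i=0\Rightarrow\APS_i=0$ by a counting argument. Let $U=\{g:v_i(\{g\})=1\}$. If $|U|\ge n$ one could place a distinct good of $U$ in each of the $n$ bundles, making every bundle value at least $1$ and forcing $\MMS_i\ge 1$; hence $\MMS_i=0$ implies $|U|\le n-1$. Moreover, by Claim~\ref{clm:xos-prop} together with binary marginals, every set $T$ with $v_i(T)\ge 1$ contains a good of $U$. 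Now if $\APS_i\ge 1$, Definition~\ref{def:aps-sets} supplies weights $\lambda_T\ge 0$ summing to $1$, supported on sets of value at least $1$, with $\sum_{T\ni g}\lambda_T\le \tfrac1n$ for every $g$. Counting total weight through $U$ yields
\[
1=\sum_T\lambda_T\le \sum_{T}\lambda_T\,|T\cap U| = \sum_{g\in U}\sum_{T\ni g}\lambda_T \le \frac{|U|}{n}\le\frac{n-1}{n}<1,
\]
a contradiction. Since bundle values are integers, the admissible support in Definition~\ref{def:aps-sets} is the same for every threshold in $(0,1]$, so this also excludes $\APS_i\in(0,1)$; therefore $\APS_i=0$ and the requirement $0.1222\,\APS_i=0$ holds trivially, finishing Part~1.

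For Part~2, I would take the three-agent, six-good identical-valuation instance of Theorem~\ref{thm:xos-aps-lower-bound}, where $\MMS=1$ and $\APS\ge 2$. Suppose some allocation is an $\alpha$-\APS allocation with $\alpha>\tfrac12$. Then every agent receives value at least $\alpha\cdot\APS\ge 2\alpha>1$, and since values are integers, every agent in fact receives value at least $2$. But $\MMS=1$ means that in every allocation the least-valued bundle has value at most $1$, so some agent receives value at most $1<2$, a contradiction. Hence no $\alpha$-\APS allocation exists for any $\alpha>\tfrac12$, i.e.\ no allocation better than $0.5$-\APS exists even under identical valuations. (A $0.5$-\APS allocation does exist here, since the allocation achieving $\MMS=1$ gives every agent value $1$; one can further check $\APS=2$, confirming the bound is tight.)
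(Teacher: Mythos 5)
Your proof is correct, and its skeleton matches the paper's: Part~1 combines Theorem~\ref{thm:xos-main} with the $0.3666$-\MMS algorithm (the paper's proof cites \cite{li2021fair}; you cite \cite{li2018fair}, but the paper itself uses both labels for this result, so this is immaterial) via the bound $\APS_i\le 3\MMS_i$ when $\MMS_i\ge 1$, and Part~2 is essentially identical to the paper's use of Theorem~\ref{thm:xos-aps-lower-bound}. Where you genuinely diverge is on the degenerate case $\MMS_i=0$, which is the only nontrivial content of the paper's proof. The paper establishes $\MMS_i=0\Rightarrow\APS_i=0$ (Lemma~\ref{lem:check-aps-0}) by inspecting the output of Algorithm~\ref{alg:alloc-mms-price} — at most $n-1$ bundles of value $1$, worthless leftovers — and exhibiting an explicit price vector ($\frac{1}{n-1}$ on the goods in those bundles, $0$ elsewhere) under which a budget of $\frac1n$ buys only value $0$; it then also needs Lemma~\ref{lem:check-mms-0} (a matching computation) to detect such agents in polynomial time and remove them before running the \MMS algorithm. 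You instead argue directly on the weight program of Definition~\ref{def:aps-sets}: with $U$ the set of goods of singleton value $1$, $\MMS_i=0$ forces $|U|\le n-1$; Claim~\ref{clm:xos-prop} plus binary marginals implies every positively-valued set meets $U$; and then any feasible weight system supported on such sets has total weight at most $|U|/n\le\frac{n-1}{n}<1$, a contradiction, with integrality of values excluding $\APS_i\in(0,1)$. Your route buys two things: it avoids the pricing machinery and the structure of Algorithm~\ref{alg:alloc-mms-price}'s output entirely, and it renders Lemma~\ref{lem:check-mms-0} unnecessary, since the same approximate-\MMS allocation serves all agents uniformly (agents with $\APS_i=0$ are satisfied vacuously, so nobody needs to be identified or removed). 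The paper's route, in exchange, produces an explicit certificate price vector in the spirit of the primal definition \eqref{def:aps-price} and reuses tools already built for Theorem~\ref{thm:xos-main}. Both arguments are sound.
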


To prove this theorem, we first prove the following lemmas to separate the agents with $\APS=0$ and $\MMS=0.$
\begin{restatable}{lemma}{LemMMSZeroCheck}\label{lem:check-mms-0}
Given an instance $\Ins$ with \XOS binary marginal valuations, one can check in polynomial time if any agent has $\MMS=0.$
\end{restatable}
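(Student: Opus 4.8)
The plan is to reduce the question of whether $\MMS_i = 0$ to a simple counting condition on the goods that are individually valuable to agent $i$. First I would record the key structural fact that, because a binary \XOS function is subadditive, a bundle is worthless precisely when every good in it is worthless as a singleton. Concretely, let $G_i = \{g \in [m] : v_i(\{g\}) = 1\}$. If a bundle $A$ meets $G_i$, then monotonicity gives $v_i(A) \geq v_i(\{g\}) = 1$ for any $g \in A \cap G_i$; conversely, if $A \cap G_i = \emptyset$ then subadditivity yields $v_i(A) \leq \sum_{g \in A} v_i(\{g\}) = 0$, so $v_i(A) = 0$. Hence $v_i(A) \geq 1$ if and only if $A \cap G_i \neq \emptyset$.

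With this equivalence in hand, the next step is to characterize $\MMS_i \geq 1$. By definition $\MMS_i \geq 1$ holds exactly when there is a partition $(A_1, \ldots, A_n) \in \Pi_{[n]}([m])$ with $v_i(A_k) \geq 1$ for every $k$, which by the equivalence above means each of the $n$ bundles must contain at least one good of $G_i$. Since the bundles are disjoint, such a partition exists if and only if $|G_i| \geq n$: if $|G_i| \geq n$ we place one distinct good of $G_i$ in each bundle and distribute the remaining goods arbitrarily, while if $|G_i| < n$ then by pigeonhole some bundle receives no good of $G_i$ and is therefore worthless. Consequently $\MMS_i = 0$ if and only if $|G_i| < n$.

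The algorithm is then immediate: for each good $g \in [m]$ query $v_i(\{g\})$, count the number of goods of value $1$ to obtain $|G_i|$, and declare $\MMS_i = 0$ exactly when $|G_i| < n$. This uses $m$ value queries per agent and runs in polynomial time, and repeating it over all agents checks whether any agent has $\MMS = 0$. The only point requiring care — and the single place the \XOS assumption is genuinely used — is the subadditivity step showing that a bundle disjoint from $G_i$ has value zero; in particular it rules out the pathological possibility that $v_i(\{g\}) = 0$ for every singleton while $v_i([m]) > 0$. Everything else is elementary counting, so I expect no substantial obstacle beyond stating this structural observation cleanly.
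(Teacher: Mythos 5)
Your proposal is correct, and it rests on the same structural insight as the paper's proof: with binary \XOS (hence subadditive and monotone) valuations, $\MMS_i \geq 1$ holds exactly when one can hand each of the $n$ bundles a good of singleton value $1$. The difference is in execution. The paper builds a complete bipartite graph with $n$ dummy vertices on one side and the goods on the other, and computes a maximum weight matching; if every dummy vertex is matched to a value-$1$ good, it concludes $\MMS_i \geq 1$. Your argument replaces this with plain counting of $G_i = \{g : v_i(\{g\}) = 1\}$, which is strictly simpler: since all edges incident to a given good carry the same weight, the matching computation degenerates to exactly the test $|G_i| \geq n$, so the matching machinery buys nothing. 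Your write-up is also more complete than the paper's in one respect: the paper only argues the direction ``matching succeeds $\Rightarrow$ $\MMS_i \geq 1$'' and leaves the converse implicit, whereas you explicitly prove, via subadditivity and pigeonhole, that $|G_i| < n$ forces some bundle to have value $0$ in every partition, hence $\MMS_i = 0$. That converse is precisely where the subadditivity of \XOS is needed (monotonicity alone would not rule out a worthless-singletons bundle having positive value), and you correctly isolate it as the one nontrivial step.
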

\begin{proof}
To check if $\MMS_i=0$ for some agent $i$, we form a complete weighted bipartite graph where one side has $n$ vertices, and the goods correspond to vertices on the other side. The weight of each edge is the value of the agent $i$ for the good adjacent to the edge. We compute one maximum weight matching of the goods. If each of the $n$ vertices in the left part gets assigned a good of value $1,$ then one can form an allocation with the matched goods in separate bundles. The remaining goods can be allocated arbitrarily. By subadditivity, each bundle has value at least $1,$ hence $\MMS_i\geq 0.$
\end{proof}
\begin{restatable}{lemma}{LemXOSMMSZero}
\label{lem:check-aps-0}
For an agent with a binary \XOS valuation function in a fair allocation instance, $\APS=0$ if and only if $\MMS=0.$
\end{restatable}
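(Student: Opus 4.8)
The plan is to prove the two directions separately, the forward one being immediate from domination and the reverse requiring a short covering argument tailored to the binary \XOS structure. For $\APS = 0 \Rightarrow \MMS = 0$, I would simply invoke Claim \ref{clm:aps-dominates-mms}: since $0 \le \MMS_i \le \APS_i = 0$, we get $\MMS_i = 0$. All the substance lies in the converse, which I would establish in contrapositive form, namely $\APS_i > 0 \Rightarrow \MMS_i \ge 1$.

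The first observation I would record is that binary marginals force every bundle value to be a non-negative integer (start from $v_i(\emptyset)=0$ and add goods one at a time, each contributing $0$ or $1$). Consequently, since Definition \ref{def:aps-sets} is a finite linear program whose optimum $z = \APS_i$ is attained, if $\APS_i > 0$ then every set $T$ in the support of the optimal weights $\{\lambda_T\}$ satisfies $v_i(T) \ge z > 0$ and hence $v_i(T) \ge 1$. I would then use the \XOS representation exactly as in the proof of Claim \ref{clm:xos-prop}: writing $v_i = \max_{f \in \F} f$ with each $f$ additive and $0/1$-valued, a set $T$ with $v_i(T) \ge 1$ is witnessed by some $f$ with $f(T) \ge 1$, so $T$ contains a good $g$ with $f(g) = 1$, whence $v_i(\{g\}) = 1$. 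Thus every support set contains at least one good from $G := \{g : v_i(\{g\}) = 1\}$.

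The key step is then a weight-counting inequality. Summing the per-good budget constraints over $g \in G$ and using that each support set meets $G$, I would write $1 = \sum_T \lambda_T \le \sum_T \lambda_T \, |T \cap G| = \sum_{g \in G}\sum_{T \ni g}\lambda_T \le |G|/n$, which forces $|G| \ge n$. Finally, $|G| \ge n$ yields $\MMS_i \ge 1$: allocate to each of the $n$ agents a distinct good of $G$ and distribute the remaining goods arbitrarily; by monotonicity every bundle has value at least $1$. This is precisely the matching witness underlying Lemma \ref{lem:check-mms-0}, so I can cite it directly. Together the two directions give $\APS = 0 \iff \MMS = 0$.

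I expect the only delicate point to be the reduction from ``$v_i(T) \ge z > 0$'' to ``$v_i(T) \ge 1$'' and the extraction of a singleton-value-$1$ good from each support set; both rely crucially on binary marginals and on the additive $0/1$ representation of \XOS, and they would fail for general binary subadditive valuations (cf. the remark following Claim \ref{clm:xos-prop}). The counting inequality itself is routine once this covering property is in hand.
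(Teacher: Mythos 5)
Your proof is correct, but for the nontrivial direction it takes a genuinely different route from the paper's. The paper proves $\MMS_i=0\Rightarrow\APS_i=0$ constructively on the price side: it runs Algorithm \ref{alg:alloc-mms-price} and uses Lemma \ref{lem:xos-partial-alloc-bound} to conclude that when $\MMS_i=0$ the output consists of at most $n-1$ singleton bundles of value $1$ plus a worthless remainder $\R$; pricing each of those at most $n-1$ goods at $1/(n-1)$ and everything else at $0$ makes every valuable good unaffordable at budget $1/n$, so $\APS_i=0$ by the price-based definition \eqref{def:aps-price}. You instead prove the contrapositive $\APS_i>0\Rightarrow\MMS_i\ge 1$ from the weight-based Definition \ref{def:aps-sets}: integrality of values forces every support set to have value at least $1$, each such set must contain a good of singleton value $1$, and your budget-counting inequality forces at least $n$ such goods, which immediately yields an allocation witnessing $\MMS_i\ge 1$. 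What the paper's route buys is reuse of machinery it needs anyway (Algorithm \ref{alg:alloc-mms-price}) and an explicit adversarial price vector, which is in the spirit of \APS; what your route buys is self-containment (no algorithm, no Lemma \ref{lem:xos-partial-alloc-bound}) and, notably, greater generality: the only place you use the \XOS structure is to extract a value-$1$ singleton from a set of value at least $1$, and that step needs only subadditivity, since $v_i(T)\le\sum_{g\in T}v_i(\{g\})$ together with binary marginals gives some $v_i(\{g\})=1$. Hence your argument extends to binary subadditive valuations, where the paper's route is blocked by the failure of Claim \ref{clm:xos-prop} (see the remark following it). Two cosmetic points, neither a gap: the program in Definition \ref{def:aps-sets} is not literally a single LP in $(z,\lambda)$, but its optimum is attained, a fact the paper itself uses in the proof of Lemma \ref{lem:eq-matroid-rank-in-terms-of-aps}; and an arbitrary \XOS representation need not consist of $0/1$-valued additive functions, but this does not matter---any $f$ with $f(T)\ge 1$ has some $f(g)>0$, and then $v_i(\{g\})\ge f(g)>0$ forces $v_i(\{g\})=1$ by binary marginals, so your covering step stands.
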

\begin{proof}
If $\MMS_i>0,$ then from Claim \ref{clm:aps-dominates-mms}, $\APS_i>0.$ Otherwise when $\MMS_i=0,$ the allocation returned by Algorithm \ref{alg:alloc-mms-price} has the following properties according to Lemma \ref{lem:xos-partial-alloc-bound}. The value of $\R=0,$ and the value of each bundle in $\A$ is either $0$ or $1.$ Further, there are at most $(n-1)$ bundles with value $1,$ else $\MMS_i\geq 1.$ Also, each of the $1$ valued bundles have exactly $1$ good in them, and the $0$ valued bundles have no goods in them. We assign prices to the goods as follows. Assign a price of $1/(n-1)$ to each good in the highest valued $n-1$ bundles of $\A,$ assign a price of $0$ to all the remaining goods (in $\R$).

At a budget of $1/n,$ the agent cannot afford any $1$ valued bundle. As the remaining goods together have a value $0,$ $\APS=0.$
\end{proof}

\begin{proof}[Proof of Theorem \ref{thm:xos-algo}] From Lemmas \ref{lem:check-mms-0} and \ref{lem:check-aps-0}, we remove all the agents with $\APS=0$ by giving them no goods in polynomial time. For the remaining agents, we know $\APS\geq \MMS\geq 1.$ Combined with $\APS\leq 2\cdot\MMS+1$ from Theorem \ref{thm:xos-main}, we have $\APS\leq 3\cdot\MMS$ for all the remaining agents. 

\cite{li2021fair} show that there exists an efficient algorithm to compute a $0.3666$-\MMS allocation, that is, an allocation where every agent receives a bundle of value at least $0.3666$ times their $\MMS$ value. As $\APS\leq 3\cdot\MMS,$ this implies that an allocation that gives every agent a bundle of value at least $~0.1222$ times their $\APS$ can be computed in polynomial time.   

Finally, Theorem \ref{thm:xos-aps-lower-bound} shows an instance where the agents have identical valuation functions, and their $\APS$ is at least twice their $\MMS.$ By definition of $\MMS,$ no allocation can have the smallest bundle's value more than $\MMS.$ Therefore, in every allocation, at least the smallest bundle's agent receives a bundle of value at most half their $\APS,$ and a better than $0.5$-\APS allocation does not exist.
\end{proof}
\section{Submodular Valuations with Ternary Marginals}\label{sec:tern-submod}

In this section, we show the following hardness result.
\begin{theorem}\label{thm:tern-submod}
In an instance $\Ins,$ it is \classNP-hard to compute the \MMS value of an agent approximately up to a factor better than $5/6,$ when the agent has a submodular valuation function, even when all the marginal utilities are in $\{0,1/2,1\}.$
\end{theorem}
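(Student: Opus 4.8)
The plan is to establish \classNP-hardness by a gap-producing reduction from a partitioning problem that is \classNP-complete to decide exactly, using the extra expressive power of the $1/2$ marginal (unavailable for \MRFs) to manufacture a multiplicative gap of $5/6$ between the ``satisfiable'' and ``unsatisfiable'' cases. I would reduce from either \threematch or \Maxcolor: the former gives a clean YES/NO instance in which a perfect matching corresponds to a partition of the goods into $n$ bundles of uniformly high value, while the latter supplies an already-quantified gap on the fraction of monochromatic edges. In either case the reduction fixes the number of bundles $n$ to the matching/color parameter and creates one good per element (together with auxiliary gadget goods), so that the ``intended'' bundles are exactly the matched triples or the legal color classes.

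The valuation $v$ with marginals in $\{0,\tfrac12,1\}$ I would build as a nonnegative combination of two matroid rank functions, $v(S)=\tfrac12\big(r_{\M_1}(S)+r_{\M_2}(S)\big)$: each summand is submodular with binary marginals, so $v$ is submodular and $v(g\mid S)\in\{0,\tfrac12,1\}$ automatically, and the matroids $\M_1,\M_2$ encode the two ``coordinate'' constraints of the source instance (the two individually easy directions of an otherwise hard matroid-intersection-type condition, matching the \classNP-hardness of the source). The key design goal is that a bundle attains the top value $V_{\mathrm{hi}}$ exactly when it is a valid object, while any bundle carrying a defect is capped at $V_{\mathrm{lo}}$ with $V_{\mathrm{lo}}/V_{\mathrm{hi}}=5/6$; the half-integrality of $v$ is precisely what lets a single defect cost half a unit and tune the ratio to $5/6$ rather than to the coarser value forced by purely integer marginals.

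Completeness and soundness then follow the usual template. For completeness, a YES instance yields a partition in which every bundle is valid, so every bundle has value $V_{\mathrm{hi}}$ and hence $\MMS\ge V_{\mathrm{hi}}$. For soundness I would argue by a double-counting bound: in a NO instance the total attainable value across any partition is deficient, and because each bundle is individually capped at $V_{\mathrm{hi}}$, the deficiency cannot be hidden and forces at least one bundle down to $V_{\mathrm{lo}}$, giving $\MMS\le V_{\mathrm{lo}}=\tfrac56 V_{\mathrm{hi}}$. Consequently any algorithm approximating $\MMS$ to a factor strictly better than $5/6$ would separate the two cases and decide the source problem, which proves the claim.

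I expect the main obstacle to be the gadget design in the soundness direction: one must guarantee that the unavoidable defect of a NO instance concentrates in a single bundle at the full $1/2$-unit cost and cannot be ``smeared'' across many bundles (which would lower the minimum only by a vanishing amount and destroy the constant gap), all while keeping $v$ genuinely submodular with marginals confined to $\{0,\tfrac12,1\}$. Verifying submodularity and the exact marginal set for the combined gadget, and checking that the worst-case bundle value is exactly $V_{\mathrm{lo}}$ rather than merely slightly below $V_{\mathrm{hi}}$, is the delicate part; the combination-of-two-matroids form of $v$ is what I would lean on to make both checks routine.
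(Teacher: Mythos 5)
You have correctly identified the paper's high-level strategy --- a gap reduction from \threematch in which YES instances have $\MMS = 3$, NO instances have $\MMS \le 2.5$, and the ratio $2.5/3 = 5/6$ gives the hardness factor --- but your proposal stops exactly where the proof begins: the valuation function is never constructed, and you yourself flag the gadget design as the main obstacle. That construction is essentially the entire content of the theorem. The paper needs no auxiliary goods and no matroid machinery: with $m$ agents and $3m$ goods (one per element of $X\cup Y\cup Z$), it sets $v(\Set)=|\Set|$ for $|\Set|\le 2$, $v(\Set)=3$ for $|\Set|=3$ with $\Set\in\T$, $v(\Set)=2.5$ for $|\Set|=3$ with $\Set\notin\T$, and $v(\Set)=3$ for $|\Set|\ge 4$. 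Submodularity is verified directly from the marginals: they depend only on cardinality (up to the $\T$-membership distinction at sizes $2$ and $3$), and the largest possible marginal onto a set of size $k$ never exceeds the smallest possible marginal onto a set of smaller size. Your ``smearing'' worry in the soundness direction also dissolves with this construction, by a counting argument rather than double counting: in a NO instance, any partition of the $3m$ goods into $m$ bundles either contains a bundle of size at most $2$ (value at most $2$), or consists entirely of size-$3$ bundles, which cannot all be members of $\T$ (else they would form a perfect matching); hence the minimum bundle has value at most $2.5$.

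The specific functional form you lean on to make submodularity ``routine,'' namely $v(S)=\tfrac12\bigl(r_{\M_1}(S)+r_{\M_2}(S)\bigr)$, is indeed submodular with marginals in $\{0,\tfrac12,1\}$, but it very likely cannot encode \threematch, and you give no construction of $\M_1,\M_2$. The obstruction is the matroid closure axiom. To reproduce the needed gap you would want every set of size at most $2$ and every $\T$-triple independent in both matroids, and every non-$\T$ $3$-set dependent in at least one of them. But in a matroid where all $2$-sets are independent, if $\{a,b,c\}$ and $\{a,b,d\}$ are both dependent, then $c$ and $d$ lie in the closure of $\{a,b\}$, which forces $\{a,c,d\}$ and $\{b,c,d\}$ to be dependent as well; since $\T$ is an arbitrary triple family, such derived sets can themselves be $\T$-triples, so conflicting non-triples must be split between the two matroids. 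This becomes a $2$-coloring problem on a conflict graph that an adversarial $\T$ can make infeasible (odd conflict cycles), so the ``combination of two matroids'' route is precisely where your approach would fail. A direct cardinality-based definition with a half-unit bonus for $\T$-membership, as in the paper, sidesteps all of this.
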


An immediate corollary, using Claim \ref{clm:aps-dominates-mms}, is that computing the $\APS$ value approximately up to a factor better than $5/6$ is also \classNP-hard for such an agent.

The proof of the theorem has three parts. First, we show a reduction from the known \classNP-complete problem \threematch to a fair allocation instance with agents with identical valuations. We then show that this valuation function in the reduced instance is submodular with all the marginal values in $\{0,1/2,1\}.$ Finally, we show the correctness of the reduction, establishing the factor of the hardness of approximation.

\noindent
\textbf{Reduction.} The $\threematch$ problem is as follows. Given are three disjoint sets $X,Y,Z,$ having $m$ elements each, and a set $\T$ of triples $(a,b,c)$, where $a\in X,$ $b\in Y$ and $c\in Z.$ The problem is to answer if there is a subset of $m$ triples in $\T$ called a $\threematch$ of $X,Y,Z,$ that \textit{cover} all of $X,Y$ and $Z,$ meaning for all $s\in X\cup Y \cup Z,$ $s\in \threematch.$ 

Given an instance of $\threematch,$ we form a fair allocation instance as follows. There are $m$ agents, and a set $\M$ of $3m$ goods, one good corresponding to each element of $X,Y$ and $Z.$ All the agents have the following identical valuation function $v$ for the goods. For any subset $\Set$ of $\M,$ $v(\Set)$ is defined as follows.
\begin{align*}
    & v(\Set) = \left\{ \begin{array}{ll}
         1,& \text{ if }|\Set|=1 \\
         2,& \text{ if }|\Set|=2 \\ 
         2.5,& \text{ if }|\Set|=3 \text{ and }\Set\notin \T \\
         3,& \text{ if }|\Set|=3 \text{ and }\Set\in \T \\
         3,& \text{ if }|\Set|\geq 4 \\
         \end{array}\right.
\end{align*}

\textbf{Function $v$ is submodular with ternary marginals.} First, let us compute the marginal utility of a good $g$ over sets of different sizes. From the definition of $v(\Set),$ one can verify that,
\begin{align*}
    & v(g\vert \Set)= \left\{ \begin{array}{ll}
        1,& \text{ if }|\Set|\leq 1 \\
        0.5,& \text{ if }|\Set|=2 \text{ and }\Set\cup \{g\}\notin \T \\
        1,& \text{ if }|\Set|=2 \text{ and }\Set\cup \{g\}\in \T \\
        0.5,& \text{ if }|\Set|= 3 \text{ and }\Set\notin \T\\
        0,& \text{ if }|\Set|= 3 \text{ and }\Set\in \T\\
        0, &\text{ if }|\Set|\geq 4.
         \end{array}\right.
\end{align*}

Therefore, $v(g\vert \Set)\in \{0,1/2,1\}$ for all $g$ and $\Set.$

Also $v(g\vert \Set)\leq v(g\vert \Set'),$ for any two sets $\Set,\Set'$ with $|\Set|\geq |\Set'|,$ hence also when $\Set'\subseteq \Set.$ This establishes submodularity.

\noindent 
\textbf{Correctness.} Finally, we prove that the \MMS value of any agent is $3$ if and only if the $\threematch$ instance has a solution, and is at most $2.5$ otherwise. 

Suppose a $\threematch$ exists. Then one can divide the $3$ goods from each triple in the solution to every agent. Each agent receives a bundle of value $3.$ As the highest value under $v$ of any set of goods is $3,$ $\MMS\leq 3.$ Therefore, in this case, $\MMS=3.$

Alternatively, suppose the $\MMS$ value of the reduced instance is $3.$ Then note that every agent must receive exactly $3$ goods. Otherwise, some agent will receive at most $2$ goods, and have value at most $2.$ A bundle of $3$ goods has value $3$ only when the corresponding elements form a triple in the $\threematch$ instance. Also, the bundles of goods in the $\MMS$-defining allocation are disjoint, hence the triples allocated to all the agents are disjoint. Therefore, the allocation consists of goods corresponding to $m$ disjoint triples, that cover all the elements, hence form a solution of $\threematch.$

Now, if there was an algorithm that computed the \MMS value within a factor better than $5/6$ for such instances with submodular functions and ternary marginals, then given the reduced instance, the algorithm would output a value higher than $2.5$ if and only if a $\threematch$ existed. This proves Theorem \ref{thm:tern-submod}.

\section{Conclusion}
We analyzed the fairness notion of \APS for indivisible goods under submodular and \XOS functions with binary marginals, a rich and expressive class of valuation functions \cite{benabbou2021finding, barman2020existence, barman2021approximating, li2018fair}. Under binary submodular valuations, we give a rather surprising result that \APS=\MMS. This is not true for even additive valuations (with non-binary marginals.) On the other hand for fractionally subadditive functions with binary marginals, we show a gap of 2 between \APS and \MMS and show that is almost tight.

It would be interesting to study the relations between \APS and \MMS for other valuation functions, and in particular for binary subadditive valuations. Subadditive functions generalizes both fractionally subadditive and submodular functions, and remains relatively less explored. We note that, most of Section \ref{sec:xos} can be extended to work for binary subadditive functions, except Claim \ref{clm:xos-prop}. A generalization of this claim for subadditive valuations will be helpful in determining the gap between \MMS and \APS under these valuations.


Finally, a study of \APS for the case when agents are heterogeneous, i.e. each agent has a weight or endowment, under valuations beyond-additive with binary marginals is the next natural question. 
\bibliographystyle{ACM-Reference-Format}
\bibliography{literature}
\end{document}